\newcommand{\appref}[1]{\hyperref[#1]{{Appendix~\ref*{#1}}}}
\newcommand{\be}{\begin{eqnarray} \begin{aligned}}
\newcommand{\ee}{\end{aligned} \end{eqnarray} }
\newcommand{\benn}{\begin{eqnarray*} \begin{aligned}}
\newcommand{\eenn}{\end{aligned} \end{eqnarray*}}
\newcommand*{\cG}{\mathcal{G}}
\newcommand*{\cP}{\mathcal{P}}
\newcommand*{\cS}{\mathcal{S}}
\newcommand{\bc}{\begin{center}}
\newcommand{\ec}{\end{center}}
\newtheorem{theorem}{Theorem}
\newtheorem{lemma}{Lemma}
\def\01{\{0,1\}}
\newtheorem{prop}{Proposition}
\newcommand{\calP}{{\mathcal{P}}}
\newcommand{\calT}{{\mathcal{T}}}
\newcommand{\calQ}{{\mathcal{Q}}}
\newcommand{\ZZ}{\mathbb{Z}}
\newcommand{\EE}{\mathbb{E}}
\newcommand{\FF}{\mathbb{F}}
\newtheorem{conj}{Conjecture}
\newcommand{\IBM}{IBM T.J. Watson Research Center, 1101  Kitchawan Road, Yorktown Heights, NY 10598}
\begin{document}

\title{Trading classical and quantum computational resources}

\author{Sergey Bravyi}
\author{Graeme Smith}
\author{John A. Smolin}
\affiliation{\IBM}

\begin{abstract}
We propose examples of a hybrid quantum-classical simulation where
a classical computer assisted by a small  quantum processor 
can efficiently simulate a larger quantum system.
First we consider sparse quantum circuits such that each qubit
participates  in $O(1)$ two-qubit gates. It is shown that 
any sparse circuit on $n+k$ qubits can be simulated
by sparse circuits on $n$ qubits and a classical processing
that takes time $2^{O(k)} poly(n)$.
 Secondly, we study Pauli-based computation (PBC)
where  allowed operations are   non-destructive eigenvalue measurements
of  $n$-qubit Pauli operators. The computation begins by initializing each qubit
in the so-called magic state. This model is known to be equivalent to the universal quantum
computer. We show that 
any PBC on $n+k$ qubits can be simulated by  PBCs on $n$ qubits and a
classical processing that takes time $2^{O(k)} poly(n)$.
 Finally, we propose a purely classical algorithm that can simulate
a PBC on $n$ qubits in a time $2^{\alpha n} poly(n)$
where $\alpha\approx 0.94$.  This improves upon the brute-force simulation method
which takes time $2^n poly(n)$.
Our algorithm exploits the fact that 
 $n$-fold tensor products of magic states admit a low-rank
decomposition into $n$-qubit stabilizer states. 
 \end{abstract}

\maketitle

\section{Introduction}
\label{sec:intro}

Quantum computers promise a substantial speedup
 over classical ones  for certain number-theoretic 
problems and the simulation of quantum systems~\cite{shor1994algorithms,hallgren2007polynomial,lloyd1996universal}.
Experimental efforts to build a quantum computer remain in their
infancy though, limited
to proof-of-principle experiments on a handful of qubits.  In contrast, the design of classical computers is a mature field
offering billions of operations per second in off-the-shelf machines  and petaflops in leading supercomputers.  To prove their worth, quantum computers will have to offer computational solutions that rival the performance of classical supercomputers, a daunting task to be sure. 

Here we study hybrid quantum-classical computation, wherein a small quantum processor is combined with
a large-scale classical computer to jointly solve a computational task. To motivate this problem, imagine 
that a client can access a quantum computer with 100 qubits and essentially perfect quantum gates. 
Such a computer lies in the regime where it is likely to outperform any classical machine (since it would be nearly impossible to emulate classically). Imagine further that  the client wants to implement  a quantum algorithm 
on 101 qubits, but it is  impossible to expand the hardware
to accommodate one extra qubit.
Does the client have any advantage at all from the access to a quantum computer in this scenario? 
Can one divide a quantum algorithm into  subroutines that require less qubits than the entire algorithm?
Can one implement each subroutine separately and combine their outputs on a classical computer? 
These are the main questions addressed in the present paper. 
Put differently, we ask how to  add one {\em virtual qubit} to an existing quantum machine at
the cost of an increased classical and quantum running times, but without modifying
the machine hardware.
More generally, one may ask what is  the cost of adding $k$ virtual qubits to an existing
quantum computer of $n$ qubits and how to
characterize the tradeoff between quantum and classical resources in these settings.

As one may expect, the cost of adding virtual qubits 
varies for different 
computational models. 
Although the circuit-based model of a quantum computer  is the most natural and well-studied,
several alternative models have been proposed, such as 
the measurement-based~\cite{raussendorf2001one} and the adiabatic~\cite{Aharonov2008adiabatic}
quantum computing, as well as the model DQC1 where most of the qubits  are initialized in the maximally
mixed state~\cite{knill1998power}. Our goal is to identify 
quantum computing models which enable efficient addition of  virtual qubits. 
Below we  describe two examples of such models.

We begin with the model based on sparse  quantum circuits. 
Recall that a quantum circuit on $n$ qubits is a collection of gates, 
drawn from some fixed (usually universal) gate set, 
with $n$ input qubits and $n$ output qubits. 
Below we assume that the gate set includes only one-qubit and two-qubit gates.
Let us say that a circuit is  {\em $d$-sparse} if each qubit participates in at most
$d$ two-qubit gates.  We shall be interested in the regime when $d$ is a constant independent of $n$
or when $d$ grows very slowly, say $d\sim \log{(n)}$.
This regime covers  interesting quantum algorithms that can be 
described by low-depth circuits~\cite{cleve2000fast} since any depth-$d$ quantum circuit must be $d$-sparse
(although the converse is generally not true).
It is believed that a constant-depth quantum computation
cannot be efficiently simulated by classical means only~\cite{Terhal2004constant,BremnerIQC2009}.
It is also likely that early applications of quantum 
computers will be based on relatively low-depth circuits
because they impose less stringent requirements on the
qubit coherence times.

Define a $d$-sparse quantum computation, or $d$-SQC, as 
a sequence of the following steps:
(i) initialization of $n$ qubits in the $|0\rangle$ state, (ii) action of  a $d$-sparse quantum circuit,
(iii) measurement of each qubit in the $0,1$ basis, and (iv) classical processing of the measurement
outcomes that returns a single output bit $b_{out}$. We require that the final
classical processing takes time at most $poly(n)$. 
A classical or quantum algorithm is said to simulate a $d$-SQC if it 
computes probability of the output $b_{out}=1$ with a small additive error.
Our first result is the following theorem, which quantifies the cost of adding $k$ virtual qubits
to a $d$-SQC on $n$ qubits. 
\begin{theorem} \label{thm:4}
Suppose $n\ge kd+1$. Then 
any $d$-sparse quantum computation on $n+k$ qubits
can be simulated by  a $(d+3)$-sparse
quantum computation on $n$ qubits 
repeated $2^{O(kd)} $ times and a  classical processing
which takes time $2^{O(kd)} poly(n)$.
\end{theorem}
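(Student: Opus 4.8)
\emph{Proof proposal.} The plan is to work with the interaction graph $G$ of the input $d$-sparse circuit: its vertices are the $n+k$ qubits and its edges are the two-qubit gates, so $G$ has maximum degree $d$. First I would single out a set $S$ of $k$ qubits to be the ``virtual'' ones, together with the $\le kd$ two-qubit gates incident to $S$; the hypothesis $n\ge kd+1$ is exactly what lets one pick $S$ so that the number of these virtual-incident gates is $O(kd)$ (and, if convenient, so that $S$ is an independent set of $G$, making every such gate a virtual-physical edge). All other gates, and the $|0\rangle$ preparations and terminal measurements of the $n$ physical qubits, stay untouched.

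The core step is to \emph{cut} the $\le kd$ virtual-incident gates, i.e.\ destroy all correlation between $S$ and the physical register. Any two-qubit unitary channel $\mathcal{U}_G$ acting on a physical qubit $p$ and a virtual qubit $v$ can be written as a quasiprobability mixture $\mathcal{U}_G=\sum_\mu c_\mu\,\mathcal{A}^{(p)}_\mu\otimes\mathcal{B}^{(v)}_\mu$ of products of single-qubit operations, with $O(1)$ terms and $\sum_\mu|c_\mu|=O(1)$ (expand $U_G$ in the two-qubit Pauli basis, factor $U_G\rho\,U_G^\dagger$ accordingly, and regroup conjugate pairs into real combinations of single-qubit unitary conjugations). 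Substituting this at every virtual-incident gate rewrites the output probability $\Pr[b_{\mathrm{out}}=1]$ as a signed sum of $2^{O(kd)}$ terms with total coefficient weight $2^{O(kd)}$. In each term the virtual register is completely decoupled: every $v\in S$ now evolves as an isolated single qubit, so its contribution to the measurement statistics — and a choice of its output bit — can be computed exactly in $\mathrm{poly}(n)$ time and absorbed into the classical post-processing function. What is left acting on the $n$ physical qubits is the original circuit with each virtual-incident gate replaced by a single-qubit unitary on its physical endpoint; this uses no new qubits and increases the sparsity only by a fixed constant (a careful count, once one also allows the convenient preprocessing of $G$, gives $d+3$).

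Hence each of the $2^{O(kd)}$ terms equals, up to a known scalar, the probability $\Pr[b'_{\mathrm{out}}=1]$ of a $(d+3)$-sparse quantum computation on $n$ qubits whose poly-time post-processing has the corresponding string of virtual outcomes hard-wired in. I would then estimate $\Pr[b_{\mathrm{out}}=1]$ by the standard quasiprobability sampling scheme: draw a term with probability proportional to $|c_\mu|$, execute the associated $n$-qubit $(d+3)$-SQC once to obtain an unbiased $\{0,1\}$ sample of its output, and rescale by $\pm\sum_\mu|c_\mu|$. The resulting estimator is unbiased for $\Pr[b_{\mathrm{out}}=1]$ with variance $2^{O(kd)}$, so $2^{O(kd)}$ repetitions of $(d+3)$-sparse computations on $n$ qubits give any prescribed constant additive accuracy, and assembling the samples costs $2^{O(kd)}\,\mathrm{poly}(n)$ classical time.

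I expect the delicate part to be the cutting step: arranging the decomposition so that \emph{every} resulting term is a bona fide $d$-SQC on exactly $n$ qubits — no extra ancillas beyond the constant sparsity slack, no mid-circuit measurements, and no off-diagonal amplitudes that would force a Hadamard-test qubit — while keeping the per-gate cutting overhead $O(1)$ so that the overall multiplicative cost remains $2^{O(kd)}$, and choosing $S$ well enough (this is where $n\ge kd+1$ enters) that the number of cuts, and hence the exponent, is $O(kd)$. The remaining ingredients — the Pauli-basis expansion, the classical bookkeeping for the decoupled virtual qubits, and the sampling-error analysis — are routine.
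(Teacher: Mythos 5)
There is a genuine gap at the core step of your argument. The decomposition you rely on --- writing a two-qubit unitary \emph{channel} $\mathcal{U}_G(\rho)=U_G\rho U_G^\dagger$ as a real linear combination of products of single-qubit \emph{unitary conjugations} --- does not exist for any entangling gate. In the Pauli transfer matrix picture, a single-qubit unitary conjugation $\rho\mapsto V\rho V^\dagger$ has the block-diagonal form $1\oplus R$ with $R\in SO(3)$, so any real combination of products of such maps has a two-qubit transfer matrix whose entry at $\bigl((a_1,a_2),(b_1,b_2)\bigr)$ vanishes whenever exactly one of $a_2,b_2$ (or of $a_1,b_1$) labels the identity. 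But for the CNOT one has $\mathcal{U}(X\otimes I)=X\otimes X$, i.e.\ the entry at $\bigl((X,X),(X,I)\bigr)$ equals $1$; the same obstruction appears for every entangling gate. Your suggested fix of ``regrouping conjugate pairs'' only produces unitary conjugations when the relevant Pauli pairs anticommute, and cannot generate the missing off-diagonal blocks. This is precisely why all known quasiprobability gate-cutting schemes must include local \emph{measure-and-prepare} (non-unitary) channels on the cut qubits. Those are not available in the $d$-SQC model: a mid-circuit Pauli measurement on a physical qubit that continues to be used cannot be deferred to the terminal computational-basis measurement without an ancilla per cut, and you have no spare qubits. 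So the terms of your sum are not bona fide $d$-SQCs on $n$ qubits, and the scheme as written does not go through.

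The paper sidesteps exactly this obstruction by cutting at the \emph{amplitude} level rather than the channel level: each crossing gate is expanded in the Pauli basis as $\sum_\alpha c_\alpha P_\alpha\otimes P_\alpha$ acting on the state vector, giving $U=\sum_\alpha c_\alpha V_\alpha\otimes W_\alpha$ with $2^{O(kd)}$ terms (Lemma~\ref{Lemma:expandGates}). The price is that the output probability then contains cross terms $\langle\phi_\alpha|\Pi(y)|\phi_\beta\rangle$ with $\alpha\neq\beta$, which are not probabilities of any single circuit; these are estimated by a Hadamard-test-like circuit $\Lambda(W)$ controlled on one extra qubit. The hypothesis $n\ge kd+1$ and the constant $d+3$ enter not in choosing the virtual set (as in your proposal) but in redistributing, via swaps, the $kd$ controlled gates that would otherwise all load onto the single control qubit, and in freeing one physical qubit to serve as that control. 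If you want to keep a channel-level picture, you would have to either admit mid-circuit measurements into the model or reproduce some equivalent of this interference-estimation gadget; the rest of your outline (choice of $S$, classical handling of the decoupled virtual register, Hoeffding-type sampling analysis) is fine and parallels the paper.
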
 
The above result is most useful when both $k$ and $d$ are small, for example,
$k=O(1)$ and $d=O(\log{n})$. In this case both quantum and classical running time
of the simulation scale as $poly(n)$. On the other hand, 
we expect that a direct simulation of a $d$-SQC on a classical computer takes a super-polynomial time
(see the discussion above). 
Hence the theorem provides an example when a hybrid quantum-classical simulation is more efficient
than a classical simulation alone. 

The proof of the theorem  exploits the fact that 
any $d$-sparse quantum circuit $U$ acting on a bipartite system $AB$
with $|A|\approx k$ and $|B|\approx n$ 
can be decomposed into a linear combination of $2^{O(kd)}$ 
tensor product terms $V_\alpha\otimes W_\alpha$, where $V_\alpha$
and $W_\alpha$ are $d$-sparse circuits acting on $A$ and $B$ respectively.
We show that the task of simulating $U$ can be reduced to
simulating the smaller circuits $W_\alpha$, as well as computing certain interference 
terms that involve pairs of circuits $W_\alpha,W_\beta$. 
We show that the interference terms can be estimated 
by a simple SWAP test which can be realized by a 
$(d+3)$-sparse computation on $n$ qubits.

Our second model is called Pauli-based computation (PBC).
We begin with a formal definition of the model. 
Let $\calP^n$ be the set of all hermitian Pauli operators on $n$ qubits,
that is, $n$-fold tensor products of single-qubit Pauli operators
$I,X,Y,Z$ with the overall phase factor $\pm 1$.  
A PBC on $n$ qubits is defined as  a sequence of 
elementary steps labeled by integers $t=1,\ldots,n$ where at each step $t$
one performs a non-destructive eigenvalue measurement
of some Pauli operator $P_t\in \calP^n$. 
Let $\sigma_t$ be the measured eigenvalue of $P_t$.
Note that $\sigma_t=\pm 1$ since 
any element of $\calP^n$ squares to one.
We allow the
choice of $P_t$ to be adaptive, that is,   
$P_t$ may depend on all previously measured eigenvalues $\sigma_1,\ldots,\sigma_{t-1}$.
The latter have to be  stored in a classical memory.  The computation begins by initializing 
each qubit  in the so-called magic state 
\[
|H\rangle=\cos{(\pi/8)} |0\rangle + \sin{(\pi/8)} |1\rangle.
\]
Once all Pauli operators $P_1,\ldots,P_n$ have been measured, the final quantum state
is discarded and one is left with a list of measured eigenvalues $\sigma_1,\ldots,\sigma_n$. 
The outcome of a PBC is a single classical bit 
$b_{out}$ obtained by performing a classical
processing of the measured eigenvalues. All classical processing  must take time at most $poly(n)$. 
We shall prove that the computational power of a PBC does not change if one additionally
requires that all Pauli operators $P_1,\ldots,P_n$ pairwise commute (for all measurement outcomes). 
A classical or quantum algorithm is said to simulate a PBC if it 
computes probability of the output $b_{out}=1$ with a small additive error.
An example of a PBC  is shown at Fig.~\ref{fig:pbc}.

The PBC model naturally appears in  fault-tolerant  quantum computing schemes based on
error correcting codes of
stabilizer type~\cite{gottesman1998theory}.
Such codes enable a simple fault-tolerant 
implementation of non-destructive  Pauli measurements on encoded
qubits, for example using the Steane method~\cite{steane1997active}.
Furthermore, topological quantum codes such as the surface code
enable a direct  measurement of certain logical Pauli operators by measuring a properly chosen subset
of physical qubits~\cite{fowler2009high}. 
Several fault-tolerant protocols for preparing encoded magic states
such as $|H\rangle$ have been developed~\cite{BK04,MEK,bravyi2012magic,jones2013multilevel,fowler2013surface}.
PBCs implicitly appeared in the previous work  on quantum fault-tolerance.
Our analysis closely follows the work 
by Campbell and Brown~\cite{campbell2009structure}
who showed that
a certain class of magic state distillation protocols can be implemented by PBCs.

Let us now state our  results. First, we claim that 
a PBC has the same computational power
as the standard circuit-based quantum computing model.
\begin{theorem}
\label{thm:1}
Any  quantum computation in the  circuit-based model with
$n$ qubits and $poly(n)$ gates drawn from the Clifford+T set can be simulated by a PBC
on $m$ qubits, where $m$ is the number of $T$ gates,
and $poly(n)$ classical processing.
\end{theorem}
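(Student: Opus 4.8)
The plan is to implement every $T$ gate by \emph{magic state injection}, which turns the given circuit into a computation built only from Clifford operations, Pauli measurements, and $m$ copies of the magic state $\ket{H}$, and then to bring this ``stabilizer circuit acting on $m$ magic states'' into a Pauli-based normal form — essentially the reduction that underlies the analysis of Campbell and Browne~\cite{campbell2009structure}.

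By universality of the Clifford$+T$ gate set we may assume the computation is a unitary circuit $U=C_m\,T_{q_m}\,C_{m-1}\cdots C_1\,T_{q_1}\,C_0$ with each $C_i$ a Clifford circuit, applied to $\ket{0^n}$, followed by a $Z$-basis measurement of a designated output qubit, so that $\Pr[b_{out}=1]$ is the probability of outcome $1$. I replace each $T_{q_i}$ by the standard injection gadget: adjoin a fresh ancilla prepared in $\ket{A}=TH\ket{0}$ (which lies in the same single-qubit Clifford orbit as $\ket{H}$, hence may be prepared as $\ket{H}$ with a single-qubit Clifford absorbed into the circuit), apply a CNOT from $q_i$ onto the ancilla, measure the ancilla in the $Z$ basis, and apply an $S$ correction to $q_i$ conditioned on the outcome. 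The computation now consists of nothing but: $n$ data qubits in $\ket{0}$; $m$ magic qubits in $\ket{H}$; Clifford unitaries; the $m$ gadget $Z$-measurements with adaptive Clifford corrections; the final output $Z$-measurement; and $\mathrm{poly}(n)$ classical post-processing. Everything here, apart from the $m$ copies of $\ket{H}$, is a stabilizer operation.

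Next I defer all Clifford operations. Applying repeatedly the identity ``(Clifford $V$)(measure $P$) $=$ (measure $V^\dagger P V$)(Clifford $V$),'' I commute the circuit Cliffords, the gadget CNOTs, and the adaptive $S$ corrections past all the measurements; they then act only on the state that is about to be discarded and can be dropped. Every measured operator becomes a (possibly adaptively chosen) element of $\calP^{n+m}$, and the classical processing is updated but remains polynomial. The computation now reads: prepare $\ket{0^n}_D\otimes\ket{H}^{\otimes m}_M$ on a data register $D$ and a magic register $M$; perform an adaptive sequence of Pauli measurements on $D\cup M$; classically compute $b_{out}$ from the outcomes. Finally I absorb the data register. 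Since $\ket{0^n}_D$ is a stabilizer state, $D$ is acted on only by Pauli measurements, and the unique non-stabilizer resource is $\ket{H}^{\otimes m}$ on $M$, one can propagate the evolving stabilizer group of $D$ (together with its correlations with $M$) through the sequence of Pauli measurements, Gottesman--Knill style, and conclude that the whole process is equivalent to a sequence of at most $m$ adaptively chosen Pauli measurements on $M$ alone — with the signs and the adaptive choices determined by the stored outcomes — followed by classical post-processing. Padding with trivial (identity) measurements if fewer than $m$ are produced, this is precisely a PBC on $m$ qubits, and it reproduces $\Pr[b_{out}=1]$ exactly.

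The reduction up to and including the deferral of Cliffords is routine; the crux is the final step. One must show carefully that, once everything is commuted through the data register, the $m+1$ Pauli measurements on $n+m$ qubits (the $m$ gadget measurements and the output measurement) can be realised by exactly $m$ adaptively chosen Pauli measurements on the magic register, with $b_{out}$ expressed as a function of their outcomes. The structural fact making this possible is that the data register is a \emph{pure stabilizer system} acted on only by Clifford operations and Pauli measurements, hence carries no ``non-stabilizer weight'': all of the computation's quantum content — and in particular all randomness beyond what the classical post-processing can itself reconstruct — is supported on the $m$ magic qubits. Quantitatively, the stabilizer group of $D$ starts at rank $n$ and never exceeds rank $n+m$, so at most $m$ measurement outcomes are not forced by earlier ones; the forced outcomes are computed classically, and the remaining degrees of freedom are exactly what $m$ Pauli measurements on $\ket{H}^{\otimes m}$ supply.
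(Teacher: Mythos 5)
Your reduction follows the same route as the paper: inject each $T$ gate via a magic-state gadget, defer all Clifford unitaries past the measurements by conjugating the measured operators, and then argue that the resulting adaptive Pauli measurements on $n+m$ qubits collapse to a PBC on the $m$ magic qubits. The first two stages are fine. The gap is in the final stage, which you correctly flag as the crux but then only assert. The rank-counting observation (the stabilizer group of $D$ has rank between $n$ and $n+m$, so at most $m$ outcomes are ``not forced'') bounds the number of random bits in the output distribution, but it does not construct the $m$ Pauli measurements on $M$ alone, and it silently assumes that every measured Pauli either commutes with the current stabilizer of the data register or has a forced outcome. Neither is true in general: a conjugated gadget measurement can anticommute with $Z_i$ on a data qubit (e.g.\ if a Hadamard precedes the first $T$ gate on that qubit), and it can also anticommute with a previously measured Pauli.

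That anticommuting case is exactly what the missing lemma must handle, and it is where the paper does real work: if $P_t$ anticommutes with an operator $P_s$ already stabilizing the state (a previously measured Pauli, or a dummy $Z_i$ stabilizing a $\ket{0}$ qubit), then $(I+\sigma_t P_t)\phi=(\sigma_t P_t\pm P_s)\phi$, the outcome $\sigma_t$ is a fair coin, and $V=(\sigma_t P_t\pm P_s)/\sqrt{2}$ is a Clifford \emph{unitary} that can itself be commuted to the end of the circuit and discarded. Applying this twice --- once to make all measured Paulis pairwise commute, and once more (after prepending dummy $Z_i$ measurements on the $\ket{0^n}$ register) to make them commute with every $Z_i$ on $D$ --- forces each surviving $P_t$ to act as $I$ or $Z$ on every data qubit, at which point its restriction to $M$ (with a sign absorbed from the $\ket{0}$'s) is a genuine element of $\calP^m$, and the bound $r\le m$ follows from counting independent pairwise commuting Paulis on $m$ qubits. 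Without this step your ``Gottesman--Knill style propagation'' is a plausible claim rather than a proof; supplying it turns your outline into essentially the paper's argument.
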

Recall that the Clifford+T gate set consists of
single-qubit gates 
\[
H=\frac1{\sqrt{2}} \left[ \begin{array}{cc} 1 & 1\\ 1 & -1 \\ \end{array} \right],
\quad S=\left[\begin{array}{cc} 1 & 0\\ 0 & i \\ \end{array} \right],
\quad T=\left[\begin{array}{cc} 1 & 0\\ 0 & e^{i\pi/4} \\ \end{array} \right],
\]
and the two-qubit CNOT gate. This gate set is known to be universal for
quantum computing. 
Secondly, we show that PBCs enable efficient addition of virtual qubits.
\begin{theorem}
\label{thm:2}
A PBC on $n+k$ qubits can be simulated by 
a PBC on $n$ qubits repeated $2^{O(k)}$ times and  a  classical
processing which takes time $2^{O(k)} poly(n)$.
\end{theorem}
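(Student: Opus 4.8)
The plan is to strip the $k$ extra qubits one algebraic layer at a time, leaving a computation that the $n$-qubit device can run verbatim. Partition the $n+k$ qubits as $B\cup A$ with $|B|=n$ (kept on the quantum machine) and $|A|=k$ (to be eliminated). The given PBC is a sequence of adaptive Pauli measurements followed by a classical $poly$-time map $b_{out}=f(\sigma_1,\dots,\sigma_{n+k})$; for a fixed outcome string $\sigma$ let $\Pi_\sigma=\frac{1}{2^{n+k}}(I+\sigma_{n+k}P_{n+k})\cdots(I+\sigma_1P_1)$ be the ordered product of the corresponding measurement projectors, with $P_t$ the Pauli dictated by $\sigma_1,\dots,\sigma_{t-1}$. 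Then on any input $|\psi\rangle$ one has $\Pr[b_{out}=1]=\langle\psi|M|\psi\rangle$ with $M=\sum_{\sigma:\,f(\sigma)=1}\Pi_\sigma^\dagger\Pi_\sigma$, and summing over $\sigma_{n+k},\dots,\sigma_1$ one at a time gives $\sum_\sigma\Pi_\sigma^\dagger\Pi_\sigma=I$, whence $0\le M\le I$. In particular $M$ depends only on the PBC, not on its input.

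First I would eliminate the magic states carried by $A$. Writing $|H\rangle^{\otimes k}_A=\sum_{a\in\{0,1\}^k}\gamma_a|a\rangle_A$ with $\gamma_a>0$, we have $\Pr[b_{out}=1]=\langle H|^{\otimes n}_B(\langle H|^{\otimes k}_A\,M\,|H\rangle^{\otimes k}_A)|H\rangle^{\otimes n}_B$. A real polarization identity in the indices $a,b$ rewrites the $n$-qubit operator $\langle H|^{\otimes k}_A M|H\rangle^{\otimes k}_A$ as a signed combination $\sum_\ell\lambda_\ell\,\langle\phi_\ell|_A M|\phi_\ell\rangle_A$ of $O(4^k)$ terms in which every $|\phi_\ell\rangle_A$ is a \emph{stabilizer} state --- a computational basis vector $|a\rangle$ or a vector $(|a\rangle+|b\rangle)/\sqrt2$ --- with $\sum_\ell|\lambda_\ell|\le 4(\cos(\pi/8)+\sin(\pi/8))^{2k}=2^{O(k)}$. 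Hence $\Pr[b_{out}=1]=\sum_\ell\lambda_\ell q_\ell$, where $q_\ell:=\langle\phi_\ell|_A\langle H|^{\otimes n}_B\,M\,|\phi_\ell\rangle_A|H\rangle^{\otimes n}_B\in[0,1]$ is exactly the probability that the \emph{same} $(n+k)$-qubit PBC, fed the input $|\phi_\ell\rangle_A\otimes|H\rangle^{\otimes n}_B$, outputs $b_{out}=1$.

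The core step is the claim that such a PBC --- pure stabilizer state on $A$, arbitrary state on $B$ --- can be emulated by a PBC on the $n$ qubits of $B$ plus $poly(n,k)$ classical bookkeeping. After a Clifford on $A$ (applied only virtually, i.e.\ absorbed into the later measured Paulis) put $A$ into $|0^k\rangle$, and process the measured Paulis $P_t=R_t\otimes Q_t$ ($R_t$ on $B$, $Q_t$ on $A$) in order while maintaining a stabilizer tableau for $A$. If $Q_t$ is $\pm$ an element of $A$'s current stabilizer group, the step acts on $B$ exactly as a measurement of the Pauli $\pm R_t$ (or has a deterministic outcome when $R_t=\pm I$) and leaves $A$ unchanged. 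The only other case is that $Q_t$ anticommutes with some stabilizer of $A$; then $\sigma_t$ is a fair coin independent of the past, while the post-measurement state couples one qubit of $A$ (henceforth in state $|+\rangle$) to $B$ through the controlled-Pauli $\Lambda=\outp{0}{0}\otimes I_B+\sigma_t\outp{1}{1}\otimes R_t$. The key observation is that conjugating any later product Pauli $R_{t'}\otimes Q_{t'}$ by the Clifford $\Lambda$ again yields a product Pauli $R'_{t'}\otimes Q'_{t'}$; absorbing $\Lambda$ (lazily) into the conjugation of the remaining measurements therefore preserves the invariant ``joint state $=$ (stabilizer state on $A$)$\,\otimes\,$(current state on $B$).'' Iterating over $t$, the computation reduces to: prepare $|H\rangle^{\otimes n}_B$, perform the induced sequence of at most $n+k$ adaptive Pauli measurements on $B$ (steps that act trivially on $B$ being replaced by classical coin flips or deterministic values), and output $f$ of the recorded bits --- a legitimate PBC on $n$ qubits. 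Applying this to every $|\phi_\ell\rangle$ turns the $q_\ell$ into $2^{O(k)}$ PBCs on $n$ qubits.

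Finally, estimate each $q_\ell$ to additive error $\varepsilon/(\sum_{\ell'}|\lambda_{\ell'}|)=\varepsilon\,2^{-O(k)}$ by running the corresponding $n$-qubit PBC $O(2^{O(k)}/\varepsilon^2)$ times (Hoeffding), and return $\sum_\ell\lambda_\ell\widehat q_\ell$; by the triangle inequality its error is at most $\varepsilon$. For constant $\varepsilon$ this costs $2^{O(k)}$ quantum repetitions and $2^{O(k)}poly(n)$ classical time --- the polarization coefficients, the per-run stabilizer and controlled-Pauli bookkeeping, and the evaluation of $f$. I expect the main obstacle to be the emulation step of the previous paragraph: one must check with care that the entanglement injected between $A$ and $B$ by the anticommuting measurements is always of controlled-Pauli type and can be commuted past all subsequent measurements without ever leaving the class of tensor-product Pauli observables, so that the $n$-qubit register $B$ only ever undergoes a bona fide Pauli measurement.
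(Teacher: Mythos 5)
Your proposal is correct and follows essentially the same route as the paper: expand the $k$ magic states into a real combination of $2^{O(k)}$ stabilizer projectors (your polarization identity yields such a decomposition of $|H\rangle\langle H|^{\otimes k}$ with $O(4^k)$ terms, where the paper tensors a three-term single-qubit identity to get $3^k$ terms), reduce each stabilizer-input instance to an $n$-qubit PBC by absorbing Cliffords into the measured Paulis and treating anticommuting measurements as fair coins, and recombine by Monte Carlo with an $\ell_1$-weighted error bound. The worry you flag at the end is in fact automatic -- every Pauli factors across any bipartition and Cliffords map Paulis to Paulis -- and your tableau bookkeeping is just a more explicit rendering of the paper's argument that each anticommuting measurement acts as the Clifford $(\sigma_t P_t\pm P_s)/\sqrt{2}$, which is then commuted to the end and discarded.
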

Both theorems follow from the fact that 
a generalized PBC that incorporates 
unitary Clifford gates, ancillary stabilizer states (such as $|0\rangle$ or $|+\rangle$),
and has $poly(n)$ measurements 
can be efficiently simulated by the standard PBC defined above. 
To prove Theorem~\ref{thm:1} we convert a given quantum circuit 
on $n$ qubits with $m$ $T$-gates
into a generalized PBC on $n+m$ qubits 
initialized in the $|0^{\otimes n}\rangle \otimes |H^{\otimes m}\rangle$ state.
Each $T$-gate of the circuit is converted into a simple gadget that includes adaptive
Pauli measurements and consumes one copy of the $|H\rangle$ state.
Simulating such generalized PBC by the standard PBC on $m$ qubits proves Theorem~\ref{thm:1}.

To prove Theorem~\ref{thm:2} we represent  $k$ copies of the magic state $|H\rangle$
as a linear combination of $k$-qubit stabilizer states $\phi_\alpha$
such that $|H\rangle\langle H|^{\otimes k}=\sum_\alpha  c_\alpha |\phi_\alpha\rangle\langle \phi_\alpha|$
for some real coefficients $c_\alpha$. The number of terms in this sum is $2^{O(k)}$. 
We  carry out the simulation independently for each  $\alpha$ using a generalized
PBC on $k+n$ qubits initialized in the state $|\phi_\alpha\rangle\otimes |H^{\otimes n}\rangle$
and combine the outcomes on a classical computer.
Finally, we simulate the generalized PBCs by the standard PBCs on $n$ qubits.

Perhaps more surprisingly, we prove that PBCs can be simulated on a classical computer
alone 
more efficiently than one could expect naively. Let us 
first describe a brute-force simulation method based on the matrix-vector multiplication.
Let $\phi_t$ be the $n$-qubit state obtained after measuring the Pauli operators
$P_1,\ldots,P_t$.  
One can store $\phi_t$ in  a
classical memory as a complex vector of size $2^n$. 
Each step of a PBC involves a transformation 
$\phi_t\to \phi_{t+1}$ where
$\phi_{t+1}=(1/2)(I+\sigma_t P_t)\phi_t$.
Since $P_t$ is a Pauli operator, the matrix of $P_t$ in the standard basis
is a permutation matrix modulo phase factors. 
Thus, for a fixed vector $\phi_t$,  one can compute $\phi_{t+1}$ for both choices of $\sigma_t$
 in time $O(2^n)$. Furthermore, one can compute the norm of $\phi_{t+1}$
in time $O(2^n)$ and thus determine the probability of each measurement
outcome $\sigma_t$. By flipping a classical coin 
one can generate a random variable $\sigma_t=\pm 1$ with the desired probability
distribution.   Since any PBC has at most $n$ steps, the overall cost of the classical
simulation is $O(n2^n)$. 
Below we show that this brute force simulation method is not optimal.
\begin{theorem}
\label{thm:3}
Any PBC on $n$ qubits can be simulated classically in time $2^{\alpha n} poly(n)$,
where $\alpha\approx 0.94$. 
\end{theorem}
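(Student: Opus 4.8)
The plan is to combine two ingredients: the reduction of PBC simulation to sampling from (or estimating) the amplitude structure of the magic state $|H\rangle^{\otimes n}$, and a low-rank decomposition of $|H\rangle^{\otimes n}$ into a sum of stabilizer states. First I would observe that a PBC on $n$ qubits starting from $|H\rangle^{\otimes n}$ and performing the adaptive commuting (by the generalized-PBC reduction used for Theorems~\ref{thm:1}--\ref{thm:2}, we may assume the measured Paulis commute) measurements $P_1,\dots,P_n$ produces, at each step, a state $\phi_t \propto \Pi_t |H\rangle^{\otimes n}$, where $\Pi_t = \prod_{s\le t}\tfrac12(I+\sigma_s P_s)$ is a stabilizer projector. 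The probabilities governing the random walk on outcomes $\sigma_t$ are ratios of norms $\|\Pi_t |H\rangle^{\otimes n}\|^2$. Hence the whole simulation reduces to estimating quantities of the form $\langle H^{\otimes n}| \Pi |H^{\otimes n}\rangle$ for stabilizer projectors $\Pi$ of rank $2^{n-r}$.

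The key structural fact, announced in the abstract, is that $|H\rangle^{\otimes n}$ admits a decomposition $|H^{\otimes n}\rangle = \sum_{\alpha=1}^{\chi} a_\alpha |\psi_\alpha\rangle$ into $\chi = 2^{\alpha n + o(n)}$ stabilizer states $|\psi_\alpha\rangle$, with $\alpha\approx 0.94$; I would build this by decomposing a constant-size block $|H\rangle^{\otimes m_0}$ into $\chi_0 < 2^{m_0}$ stabilizer states (the ``stabilizer rank'' of a few copies of $|H\rangle$ being strictly sub-maximal — e.g.\ using known small-rank decompositions of $|H\rangle^{\otimes 2}$ or a slightly larger block found by search), and then tensoring $\lceil n/m_0\rceil$ such blocks to get $\chi = \chi_0^{n/m_0} = 2^{(\log_2 \chi_0)\,n/m_0}$; choosing $m_0$ so that $(\log_2\chi_0)/m_0 \le 0.94$ gives the claimed exponent. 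Then for any stabilizer projector $\Pi$, $\langle H^{\otimes n}|\Pi|H^{\otimes n}\rangle = \sum_{\alpha,\beta} \bar a_\alpha a_\beta \langle\psi_\alpha|\Pi|\psi_\beta\rangle$, and each inner product $\langle\psi_\alpha|\Pi|\psi_\beta\rangle$ is computable in $poly(n)$ time by standard stabilizer-formalism algorithms (project $|\psi_\beta\rangle$ onto the stabilizer code of $\Pi$, then take the overlap with $|\psi_\alpha\rangle$, all tracked via $poly(n)$-size check matrices). This gives an \emph{exact} evaluation in time $\chi^2\, poly(n) = 2^{2\alpha n} poly(n)$ — but that exponent is too large, so the second idea is essential: instead of summing over all $\alpha,\beta$, use the fact that $\langle H^{\otimes n}|\Pi|H^{\otimes n}\rangle = \|\Pi|H^{\otimes n}\rangle\|^2$ and estimate the norm of $\Pi\sum_\alpha a_\alpha|\psi_\alpha\rangle$ by a Monte Carlo / random-projection estimator that draws a single random $\alpha$ (or a small random combination) per sample, so that each sample costs $\chi\, poly(n)$ and $poly(n)$ samples suffice for additive-error estimation; the total becomes $\chi\, poly(n) = 2^{\alpha n} poly(n)$ as claimed. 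Equivalently, one can think of this as the norm-estimation subroutine from fast stabilizer-rank simulation (Bravyi--Gosset style), applied sequentially across the $n$ measurement steps, with the running coefficients $a_\alpha$ updated but the number of terms held fixed at $\chi$.

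Carrying this out, the steps in order are: (1) reduce an arbitrary PBC to the commuting-Pauli generalized PBC on $|H^{\otimes n}\rangle$ and express the output probability as a product of conditional probabilities, each a ratio of $\|\Pi_t|H^{\otimes n}\rangle\|^2$ values; (2) construct the stabilizer decomposition of $|H^{\otimes n}\rangle$ with $\chi = 2^{0.94 n}$ terms via tensor powers of a small-block decomposition; (3) give the $poly(n)$ algorithm for $\langle\psi_\alpha|\Pi|\psi_\beta\rangle$ in the stabilizer formalism; (4) build the Monte Carlo norm estimator with $\chi\,poly(n)$ cost per sample and bound its variance so that $poly(n)$ samples give the required additive precision; (5) handle adaptivity — at step $t$ we sample $\sigma_t$ from the estimated conditional distribution and update $\Pi_t\to\Pi_{t+1}$, controlling the accumulation of error over the $n$ steps. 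The main obstacle I expect is Step (2): pinning down a concrete small block $|H\rangle^{\otimes m_0}$ with stabilizer rank $\chi_0$ small enough that $(\log_2\chi_0)/m_0 \le 0.94$, since this requires either an explicit clever decomposition or a (finite) computer search, and the resulting constant $0.94$ hinges entirely on that number. The secondary subtlety is Step (4)--(5): ensuring the Monte Carlo estimator has polynomially bounded relative variance for the \emph{ratios} of norms — if some conditional probability is exponentially small the additive-error formulation must be used carefully, and one must argue (as in the brute-force discussion) that additive error in the final output probability $\Pr[b_{out}=1]$ is all that is required, which tolerates such small factors.
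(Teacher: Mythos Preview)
Your high-level architecture (steps (1)--(3)) matches the paper exactly: reduce to computing $\langle H^{\otimes n}|\Pi|H^{\otimes n}\rangle$ for a stabilizer projector $\Pi$, expand $|H^{\otimes n}\rangle$ as a tensor power of a small-block stabilizer decomposition, and compute each cross term $\langle\psi_\alpha|\Pi|\psi_\beta\rangle$ in $poly(n)$ time via the stabilizer formalism. Where you go wrong is in locating the exponent $\alpha\approx 0.94$. In the paper this number is \emph{twice} the stabilizer-rank exponent, not the rank exponent itself: the block used is $m_0=6$ with $\chi_0=7$, giving $\chi_n\le 7^{n/6}=2^{\beta n}$ with $\beta=\tfrac{1}{6}\log_2 7\approx 0.468$, and the algorithm simply evaluates the full double sum over $(\alpha,\beta)$ \emph{exactly}, at cost $\chi_n^2\cdot poly(n)=2^{2\beta n}poly(n)\approx 2^{0.94n}poly(n)$. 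So the correct target in your step~(2) is $(\log_2\chi_0)/m_0<\tfrac12$, not $<0.94$; the former is what forces a nontrivial search (the paper's explicit $7$-term decomposition of $|H^{\otimes 6}\rangle$), whereas the latter is satisfied already by $m_0=2$, $\chi_0=2$.

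Consequently your step~(4) --- the Monte Carlo norm estimator introduced to beat down $\chi^2$ to $\chi$ --- is not part of the paper's proof at all, and your own misgivings in step~(5) are exactly why. The paper computes $\mathrm{Pr}(\sigma_1,\ldots,\sigma_t)$ \emph{exactly} (all arithmetic lives in $\ZZ[\sqrt2]$), then samples $\sigma_t$ from the exact conditional; this sidesteps the problem of exponentially small conditional probabilities that would wreck an additive-error estimator used inside an adaptive sampler. The randomized norm-estimation route you sketch is essentially a later idea (Bravyi--Gosset) and does give an improved exponent, but it requires a separate variance analysis tied to the $\ell_1$-norm of the coefficient vector, not merely the number of terms; it is neither needed nor claimed for the stated $\alpha\approx 0.94$.
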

Our simulation algorithm exploits the fact that tensor products of magic states 
admit a low-rank decomposition into stabilizer states.
Recall that an $n$-qubit state $\phi$ is called a stabilizer state if 
$|\phi\rangle=U|0^{\otimes n}\rangle$ for some $n$-qubit Clifford
operator $U$  --- a product of the elementary gates $H$, $S$, and the CNOT.

Suppose $\psi$ is an arbitrary $n$-qubit state. Define 
a {\em stabilizer rank} of $\psi$  as the smallest integer $\chi$
such that $\psi$ can be written as
$|\psi\rangle=\sum_{\alpha=1}^\chi c_\alpha |\phi_\alpha\rangle$,
where $c_\alpha$ are complex coefficients and $\phi_\alpha$
are $n$-qubit stabilizer states. 
The stabilizer rank of $\psi$ will be denoted $\chi(\psi)$. 
By definition, $1\le \chi(\psi) \le 2^n$ for any $n$-qubit state $\psi$
and $\chi(\psi)=1$ iff $\psi$ is a stabilizer state. 
For example, the magic state $|H\rangle$ has stabilizer rank $\chi(H)=2$, since
$|H\rangle$ is not a stabilizer state itself, but it can be written as a linear combination
of two stabilizer states $|0\rangle$ and $|1\rangle$. 
Furthermore, using the identity 
\[
|H^{\otimes 2}\rangle=\frac12(|00\rangle+|11\rangle)+
\frac1{2\sqrt{2}}(|00\rangle+|01\rangle+|10\rangle-|11\rangle)
\]
one can easily check that $\chi(H^{\otimes 2})=2$.
More generally, let $\chi_n$ be the stabilizer rank of $|H^{\otimes n}\rangle$.
Note that $\chi_{n+m}\le \chi_n \chi_m$ since  a tensor product of two stabilizer
states is a stabilizer state. 
In particular, $\chi_n\le (\chi_2)^{n/2} = 2^{n/2}$. 

The probability to observe measurement outcomes $\sigma_1,\ldots,\sigma_t$
in a PBC implemented up to a step $t$ can be written as
\[
\langle H^{\otimes n}|\Pi|H^{\otimes n}\rangle=\sum_{\alpha,\beta=1}^{\chi_n} \overline{c}_\alpha c_\beta
\langle \phi_\alpha|\Pi|\phi_\beta\rangle
\]
where $\phi_\alpha$ are $n$-qubit stabilizer states, $c_\alpha$ are complex coefficients,
and $\Pi=\prod_{a=1}^t (I + \sigma_aP_a)/2$ is the projector describing
the partially implemented PBC.
We will use a version of the Gottesman-Knill theorem~\cite{aaronson2004improved}
to show that  each term $\langle \phi_\alpha|\Pi|\phi_\beta\rangle$
can be computed  on a classical computer in time $n^3$. 
Since the number of terms is $\chi_n^2$ and the number of steps is at most $n$,
we would be able to simulate a PBC on $n$ qubits   classically
in time $(\chi_n)^2 n^4$.  
Improving upon the brute-force simulation method thus
 requires an upper bound $\chi_n\le 2^{\beta n}$ for some $\beta<1/2$.
We establish such an upper bound with $\beta=\log_2{(7)}/6 \approx 0.468$
by showing that  $\chi_6\le 7$
which implies  $\chi_n\le (\chi_6)^{n/6}\le 7^{n/6}$.
We expect that the scaling in Theorem~\ref{thm:3} can be  improved
by computing  $\chi_n$ for larger values of $n$. 
In Appendix~B we describe a heuristic algorithm for computing 
low-rank decompositions of $|H^{\otimes n}\rangle$ into stabilizer states
which yields the following upper bounds:   
\begin{center}
\begin{tabular}{|c|c|c|c|c|c|}
\hline
$n$ & $\; 2 \; $ & $\; 3\; $ & $\; 4\; $ & $\; 5 \;$ & $\; 6\; $ \\
\hline
$\chi_n\le $ & $2$ & $3$ & $4$ & $6$ & $7$ \\
\hline
\end{tabular}
\end{center}
We believe that these upper bounds are tight. 
A lower bound $\chi_n\ge \Omega(n^{1/2})$ is proved in Appendix~C.

\begin{figure}[htbp]
\includegraphics[width=6cm]{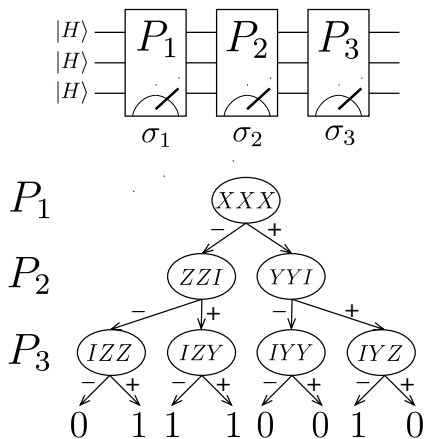}
\caption{Example of a PBC on $n=3$ qubits.
Each step $t$ involves an eigenvalue measurement
of a Pauli operator $P_t$ on $n$ qubits with an outcome $\sigma_t=\pm 1$.
A choice of $P_t$ may depend on the outcomes of all previous measurements.
A PBC on $n$ qubits can be
described by a  binary tree $\calT$ of height $n$ such that internal nodes
of $\calT$ are labeled by $n$-qubit Pauli operators
and leaves of $\calT$ are labeled by $0$ and $1$. 
The latter represent the final output bit $b_{out}$.  We require that label of any node of $\calT$ can be
computed classically in time $poly(n)$. }
\label{fig:pbc}
\end{figure}

\section{Discussion and previous work}
\label{sec:previous}

Classical algorithms for simulation of quantum circuits based on the stabilizer formalism 
have a long history. Notably, 
Aaronson and Gottesman~\cite{aaronson2004improved}
studied  adaptive quantum circuits that contain only a few non-Clifford gates.
Assuming that a circuit contains at most $m$ non-Clifford gates
and that all $n$ qubits are initially  prepared in some stabilizer state,
Ref.~\cite{aaronson2004improved} showed 
how to simulate such a circuit classically  in time $2^{4m}poly(n)$. 
To enable a comparison with our results, assume that all unitary gates belong to the
Clifford+$T$ set. By Theorem~\ref{thm:1}, a  quantum circuit  as above  can be transformed into a PBC
on $m$ qubits, where $m$ is the number of $T$-gates.  
Thus Theorems~\ref{thm:1},\ref{thm:3} provide a classical simulation algorithm with
a running time 
$2^{0.94 m} poly(n)$ which improves upon~\cite{aaronson2004improved}.
In addition, Ref.~\cite{aaronson2004improved} studied 
adaptive quantum circuits composed only of Clifford gates and Pauli measurements
with more general initial states. Assuming that the initial $n$-qubit state
can be written as  a tensor product of some $b$-qubit states,
a quantum circuit as above can be simulated classically in time
$2^{2b+2d}poly(n)$, where $d$ is the total number of measurements~\cite{aaronson2004improved}.

Methods for decomposing  arbitrary states into a linear combination of stabilizer
states aimed at simulation of quantum circuits 
were pioneered by Garcia, Markov, and Cross~\cite{garcia2012efficient,Garcia2014geometry}
who studied  decompositions into pairwise orthogonal stabilizer states (named stabilizer frames).
The latter are more restrictive than the general decompositions analyzed in the present paper.
Furthermore, Refs.~\cite{garcia2012efficient,Garcia2014geometry}
have not studied stabilizer decompositions of magic states.

The simulation algorithm of Theorem~\ref{thm:3}
is conceptually close to  the
matrix multiplication algorithms based on tensor
 decompositions~\cite{strassen1969gaussian,coppersmith1987matrix}.
In this case the analogue of a stabilizer state is 
a product state and the  analogue of a magic state is 
a tripartite entangled state that contains  EPR-type states
shared between each pair of parties, see~\cite{chitambar2008tripartite}
for details.

Efficient classical algorithms for simulation of quantum circuits
in which the initial state can be described by 
a discrete Wigner function taking non-negative values 
were investigated by Veitch et al~\cite{veitch2012negative} and by 
Howard~\cite{howard2014contextuality} et al. 
As was pointed out by Pashayan, Wallman, and Bartlett~\cite{pashayan2015estimating},
such methods can be combined with Monte Carlo sampling techniques
to enable classical simulation of general quantum circuits with the running time
scaling exponentially with the quantity related to the  negativity of the Wigner function.
To enable a comparison between Theorem~\ref{thm:3} and the results of~\cite{pashayan2015estimating}
one can employ a discrete Wigner function representation of 
stabilizer states and  Clifford operations on qubits developed by Delfosse et al~\cite{delfosse2014wigner}.
The latter is applicable only to states with real amplitudes and to Clifford operations
that do not mix $X$-type and $Z$-type Pauli operators (CSS-preserving operations). 
A preliminary analysis shows  that combining the results of Refs.~\cite{pashayan2015estimating,delfosse2014wigner}
yields a classical algorithm for simulating a restricted class of  PBC on $n$ qubits
in time $M^{2n}poly(n) \approx 2^{0.543 n}poly(n)$, where $M=2^{-1}+2^{-1/2}\approx 1.207$
 is the so-called mana of the magic state 
$|H\rangle\langle H|$, see~\cite{pashayan2015estimating,delfosse2014wigner} for details.
The  restriction is that all Pauli operators to be measured are either $X$-type or $Z$-type,
and the measurements cannot be adaptive. Such restricted PBCs are not known to be 
universal for quantum computation. 

Our method of simulating sparse quantum circuits 
has connections to ideas of tensor network representations of quantum circuits 
developed by  Markov and Shi~\cite{markov2008simulating}. 
Indeed, our proof of Theorem~\ref{thm:4} can be interpreted 
as a  particular method of expressing
the acceptance probability of a quantum computation in terms of a contraction of tensors associated with the quantum circuit.  The individual entries of the tensors are then estimated separately  with a smaller quantum computer and then 
added together.  

Let us now discuss some open problems and possible generalizations of our work. 
A natural question is whether the scaling in Theorem~\ref{thm:3} can be improved
if  $|H\rangle$ is replaced  by some other magic state.
By definition,
any  magic state is Clifford-equivalent to one of the states
$|H\rangle$ and $|R\rangle$, where $|R\rangle$ is the $+1$ eigenvector of an operator $(X+Y+Z)/\sqrt{3}$,
see Ref.~\cite{BK04} for details. 
The numerics suggests that 
$|H^{\otimes n}\rangle$ and $|R^{\otimes n}\rangle$ have 
the same stabilizer rank for $n\le 6$. We conjecture that this remains true for all $n$.
Moreover, we  pose the following conjecture which, if true, 
highlights a new optimality property of magic states
in terms of their stabilizer rank.
\begin{conj}
Let $\chi_n$ be the stabilizer rank of $|H^{\otimes n}\rangle$
and $\phi$ be an arbitrary single-qubit state. Then 
\begin{align*}
\chi(\phi^{\otimes n})=1 &&  \mbox{if $\phi$ is a stabilizer state}, \nonumber \\
\chi(\phi^{\otimes n})=\chi_n && \mbox{if $\phi$ is a magic state}, \nonumber \\
\chi(\phi^{\otimes n})>\chi_n  && \mbox{otherwise}.
\end{align*}
\end{conj}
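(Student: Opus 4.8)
The plan is to treat the three claims separately, since they demand very different amounts of work; the generic case is the only real theorem. The stabilizer case is immediate: a tensor product of single-qubit stabilizer states is an $n$-qubit stabilizer state, so $\chi(\phi^{\otimes n})=1$. Conversely $\chi(\psi)=1$ iff $\psi$ is a stabilizer state, so for every non-stabilizer $\phi$ we already have $\chi(\phi^{\otimes n})\ge 2$, and since $\chi_n\ge 2$ for all $n\ge1$ the substantive content of the third case is the \emph{sharp} bound $\chi(\phi^{\otimes n})>\chi_n$, not merely $>1$.

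For the magic case the key input is that the stabilizer rank is a Clifford invariant: if $C$ is an $n$-qubit Clifford unitary then $\chi(C\psi)=\chi(\psi)$, because $C$ permutes the finite set of stabilizer states, so $\psi=\sum_\alpha c_\alpha\phi_\alpha$ is a minimal stabilizer decomposition iff $C\psi=\sum_\alpha c_\alpha(C\phi_\alpha)$ is. By the classification quoted above, a single-qubit magic state is $\phi=U|H\rangle$ or $\phi=U|R\rangle$ for a single-qubit Clifford $U$, whence $\phi^{\otimes n}=U^{\otimes n}|H^{\otimes n}\rangle$ or $U^{\otimes n}|R^{\otimes n}\rangle$ with $U^{\otimes n}$ Clifford, so $\chi(\phi^{\otimes n})$ equals $\chi_n$ or $\chi(R^{\otimes n})$. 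Thus this case reduces to the separate claim $\chi(R^{\otimes n})=\chi_n$, itself only conjectural (verified numerically for $n\le6$ in Appendix~B). I would try to settle it by exhibiting mutual conversions $|H^{\otimes n}\rangle\leftrightarrow|R^{\otimes n}\rangle$ realized by Clifford unitaries, Pauli measurements and classical post-selection, together with the easy observation that such operations cannot increase the stabilizer rank of the underlying unnormalized vector: a deterministic conversion in each direction forces equality. Failing a clean conversion, one can bound $\chi(R^{\otimes n})$ from above by the block-tensoring method used for $\chi_n$ and from below by the argument of Appendix~C.

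The generic case is the main obstacle. The substantive claim is that magic states are the \emph{unique} non-stabilizer single-qubit states whose tensor powers attain the minimal value $\chi_n$; even the weaker statement $\chi_n=\min\{\chi(\phi^{\otimes n}):\phi\ \text{non-stabilizer}\}$ appears to be open. The plan is to recast it geometrically. For any bound $\chi$, the set $Z_\chi=\{\psi:\chi(\psi)\le\chi\}$ is a finite union of the linear spans of $\chi$-element subsets of the finite set of stabilizer states, hence \emph{closed}; consequently $\chi(\cdot)$ is lower semicontinuous and $A_n:=\{\phi:\chi(\phi^{\otimes n})\le\chi_n\}$ is closed in $\mathbb{CP}^1$. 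The map $\phi\mapsto\phi^{\otimes n}$ traces a degree-$n$ rational normal curve $C$ inside the $(n+1)$-dimensional symmetric subspace of $(\mathbb{C}^2)^{\otimes n}$, and $C$ meets any linear subspace not containing that whole subspace in at most $n$ points. Hence, \emph{provided} no $\chi_n$-element stabilizer span contains the full symmetric subspace, $A_n$ is finite (consistent with the fact that the $|H\rangle$- and $|R\rangle$-orbits under the single-qubit Clifford group are already finite point sets on the Bloch sphere), and for each fixed $n$ the conjecture reduces in principle to an astronomically large but finite enumeration: list the $\chi_n$-subsets of stabilizer states and check that $C$ meets their spans only at the images of the six stabilizer states and of the two magic orbits. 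The hard part will be making this \emph{uniform in $n$}: I expect it to require understanding the structure and conjectural rigidity of the optimal stabilizer decompositions of $|H^{\otimes n}\rangle$, so that a perturbation $\phi=(\mathrm{magic})+\varepsilon\,\delta$ provably destroys every $\chi_n$-term decomposition as $\varepsilon\to0$, forcing the strict inequality; combined with ruling out any exotic non-magic state of anomalously small $\chi(\phi^{\otimes n})$ and with controlling the symmetric-subspace caveat, this is where I see no way to avoid feeding small-$n$ data into an inductive tensoring argument.
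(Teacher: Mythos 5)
The statement you are attempting is posed in the paper explicitly as a \emph{conjecture}; the authors offer no proof of it, so there is nothing to compare your attempt against except the surrounding discussion. Your assessment of the landscape is accurate: the first case is indeed immediate (a tensor product of stabilizer states is a stabilizer state, and $\chi(\psi)=1$ iff $\psi$ is a stabilizer state), and your reduction of the magic case via Clifford invariance of the stabilizer rank is correct as far as it goes --- a Clifford unitary permutes the finite set of stabilizer states, so $\chi(C\psi)=\chi(\psi)$, and the classification of magic states reduces the claim to $\chi(R^{\otimes n})=\chi_n$. But that last equality is itself only a conjecture in the paper (verified numerically for $n\le 6$), and your proposed route to it --- deterministic interconversion of $|H^{\otimes n}\rangle$ and $|R^{\otimes n}\rangle$ by Clifford unitaries and postselected Pauli measurements --- is not known to exist; no such conversion is exhibited in the paper or elsewhere, so this step remains a genuine gap rather than a proof.

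The third case is where the proposal, by your own admission, does not close. Your geometric observation is a nice one: the locus $\{\psi:\chi(\psi)\le\chi\}$ is a finite union of linear spans and hence closed, and the Veronese curve $\phi\mapsto\phi^{\otimes n}$ meets any linear subspace not containing the whole symmetric subspace in at most $n$ points, so under the stated caveat the set of $\phi$ with $\chi(\phi^{\otimes n})\le\chi_n$ is finite. Note, however, that the ``symmetric-subspace caveat'' is not a technicality to be controlled separately: if some $\chi_n$-element stabilizer span \emph{did} contain the full symmetric subspace, then every $\phi^{\otimes n}$ would have rank at most $\chi_n$ and the third case would simply be false, so ruling this out is itself part of (and roughly as hard as) the conjecture. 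Moreover finiteness of the exceptional set, even if established, does not identify it with the six stabilizer states and the two magic orbits, and you have no mechanism that is uniform in $n$. In short: the first case is proved, the second case is reduced to a different open conjecture, and the third case is a research program rather than an argument. This matches the status of the statement in the paper, but the proposal should not be read as a proof of anything beyond the stabilizer case.
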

Less formally, the conjecture says that magic states have the smallest possible
stabilizer rank among all non-stabilizer single-qubit states. 

It is also of great interest to understand the asymptotic scaling of the stabilizer
rank $\chi_n$. Assuming that a universal quantum computation cannot be simulated
classically in polynomial time, one infers that $\chi_n$ must grow super-polynomially
in the limit $n\to \infty$. 
However, we were unable to derive such a lower bound directly without using any
assumptions. 
The fact that amplitudes of any stabilizer
state in the standard basis take only $O(1)$ different values
implies a weaker lower   bound $\chi_n\ge \Omega(n^{1/2})$, see Appendix~C.
We conjecture that  in fact $\chi_n\ge 2^{\Omega(n)}$.
Note that if this conjecture is false, that is, $\chi_n\le 2^{o(n)}$,  then constant-depth circuits
in the Clifford+$T$ basis can be simulated classically in a sub-exponential time, which
appears unlikely. Indeed, since such a circuit contains at most $m=O(n)$ $T$-gates,
where $n$ is the number of qubits, 
Theorems~\ref{thm:1},\ref{thm:3} would provide a simulation algorithm with a running time
$\chi_m^2 \cdot poly(n)=2^{o(n)} poly(n)$. (Here we ignore the
complexity of finding the optimal stabilizer decomposition  since it has to be done
only once for each $n$.)

Finally, one may explore generalizations of the stabilizer rank
to approximate decompositions into stabilizer states. 
It should be pointed out that the simulation algorithm
of Theorem~\ref{thm:3} would require approximate stabilizer decompositions 
with a precision at least  $2^{-\Omega(n)}$
since the probability of a particular measurement outcome $\sigma_1,\ldots,\sigma_t$
can be exponentially small in $n$. It is not clear whether such approximate
decompositions would have a rank substantially smaller than
the exact ones.

In the rest of the paper  we prove the theorems stated in the introduction.
From the technical perspective, Theorems~\ref{thm:4},\ref{thm:1},\ref{thm:2}
follow easily  from the definitions and from the previously known results.
On the other hand,
Theorem~\ref{thm:3} and the notion of a stabilizer rank appear to be new.
We analyze sparse quantum circuits in Section~\ref{sec:SQC}.
A classical algorithm for simulation of PBCs and the stabilizer rank of magic states
are discussed in Section~\ref{sec:srank}.
Theorems~\ref{thm:1},\ref{thm:2} are proved in Section~\ref{sec:PBC}.
Appendix~A proves a technical lemma needed to compute  inner products
between stabilizer states. 
Appendix~B describes a numerical method of computing 
low-rank stabilizer decompositions. Appendix~C proves a lower bound
on the stabilizer rank of magic states.


\section{Sparse quantum circuits}
\label{sec:SQC}

In this section we prove Theorem~\ref{thm:4}.
All quantum circuits considered below are defined with respect
to some fixed basis of gates $\cG$. We assume that
any gate in $\cG$ acts on at most two qubits. Furthermore, we assume
that $\cG$ contains all single-qubit Pauli gates $X,Y,Z$, their controlled versions,
the Hadamard gate, and the $\pi/2$ phase shift  $S=|0\rangle\langle 0| + i |1\rangle\langle 1|$.
For example, $\cG$ could be the Clifford+$T$ basis. 
Let $\Sigma^n\equiv \{0,1\}^n$ be the set of $n$-bit binary strings. 
\begin{lemma}
\label{Lemma:expandGates}
Let $U$ be a $d$-sparse quantum circuit on $k+n$ qubits.
Partition the set of qubits as $AB$, where $|A|=k$ and $|B|=n$.
Then 
\begin{align}
\label{U=VW}
U = \sum_{\alpha=1}^\chi c_\alpha V_{\alpha} \otimes W_{\alpha}, \quad \quad \chi\equiv 2^{4kd},
\end{align}
where $V_{\alpha}$ and $W_\alpha$  are $d$-sparse quantum
circuit acting on $A$ and $B$ respectively, 
and $c_\alpha$ are some  complex coefficients such that 
$\sum_{\alpha=1}^\chi |c_\alpha|^2 =1$. 
\end{lemma}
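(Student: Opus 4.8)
The plan is to expand $U$ gate by gate, handling specially only the two-qubit gates that straddle the cut between $A$ and $B$ --- call these the \emph{crossing} gates. Since $U$ is $d$-sparse, each of the $k$ qubits of $A$ participates in at most $d$ two-qubit gates, hence in at most $d$ crossing gates; as every crossing gate touches exactly one qubit of $A$, there are at most $kd$ crossing gates in $U$. A crossing gate $G$ acting on $a\in A$ and $b\in B$ can be written in the two-qubit Pauli basis as $G=\sum_{\mu,\nu\in\{I,X,Y,Z\}} g_{\mu\nu}\,\sigma_\mu^{(a)}\otimes\sigma_\nu^{(b)}$, a sum of at most $16$ terms, with $g_{\mu\nu}=\tfrac14\tr\big((\sigma_\mu\otimes\sigma_\nu)G\big)$. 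Because $G$ is unitary and $\{\tfrac12\sigma_\mu\otimes\sigma_\nu\}$ is an orthonormal basis of the $4\times4$ matrices in the Hilbert--Schmidt inner product, one has $\sum_{\mu,\nu}|g_{\mu\nu}|^2=\tfrac14\tr(G^\dagger G)=1$. Every non-crossing gate is already a tensor product across the cut (a gate inside $A$ is $g\otimes I_B$, inside $B$ it is $I_A\otimes g$, and a single-qubit gate likewise), so it needs no expansion.

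Next I would multiply out. Writing $U=G_L\cdots G_1$ in circuit order, substituting the $16$-term expansion for each crossing gate and the trivial one-term factorization for every other gate, and using multilinearity together with $(X_L\otimes Y_L)\cdots(X_1\otimes Y_1)=(X_L\cdots X_1)\otimes(Y_L\cdots Y_1)$, I obtain $U=\sum_\alpha c_\alpha\, V_\alpha\otimes W_\alpha$. Here $\alpha$ enumerates the choices of $(\mu_j,\nu_j)$ over the crossing gates, $c_\alpha$ is the product of the corresponding $g^{(j)}_{\mu_j\nu_j}$, and $V_\alpha$ (resp.\ $W_\alpha$) is the ordered product of all factors acting on $A$ (resp.\ $B$). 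The number of terms is at most $16^{kd}=2^{4kd}=\chi$, and padding with zero coefficients and $V_\alpha=W_\alpha=I$ makes it exactly $\chi$.

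Two things then remain to check. First, $V_\alpha$ and $W_\alpha$ are $d$-sparse circuits over $\cG$: the factors inherited from non-crossing gates are unchanged, and each crossing gate contributes only single-qubit Pauli factors, which belong to $\cG$; moreover, replacing a crossing two-qubit gate by single-qubit Paulis removes a two-qubit gate from each qubit it acted on, so the per-qubit two-qubit-gate count in $V_\alpha$ and $W_\alpha$ does not exceed that of $U$. Second, the normalization: by independence of the factors over the crossing gates, $\sum_\alpha|c_\alpha|^2=\prod_{j\ \mathrm{crossing}}\big(\sum_{\mu_j,\nu_j}|g^{(j)}_{\mu_j\nu_j}|^2\big)=1$.

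I do not expect a serious obstacle; the points that need care are (a) the counting of crossing gates --- the relevant bound is $kd$, coming from the $A$-side, rather than the weaker $nd$ --- and (b) the computation $\sum_{\mu,\nu}|g_{\mu\nu}|^2=1$ for each unitary crossing gate, which is exactly what makes the global normalization $\sum_\alpha|c_\alpha|^2=1$ fall out after taking the product. Bookkeeping the circuit order when collecting the $A$-factors and $B$-factors into $V_\alpha$ and $W_\alpha$ is routine but should be written out explicitly.
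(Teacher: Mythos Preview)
Your proof is correct and follows essentially the same approach as the paper's: count the crossing gates as at most $kd$ via the $A$-side, expand each in the two-qubit Pauli basis, multiply out to get $16^{kd}=\chi$ tensor-product terms, and use unitarity of each crossing gate to obtain the normalization $\sum_\alpha |c_\alpha|^2=1$. If anything, you are more explicit than the paper about the Hilbert--Schmidt computation $\sum_{\mu,\nu}|g_{\mu\nu}|^2=\tfrac14\tr(G^\dagger G)=1$ and about why the sparsity bound is preserved; the only cosmetic difference is that the paper pads to exactly $\chi$ terms by inserting dummy identity crossing gates rather than by appending zero-coefficient terms, which amounts to the same thing.
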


\begin{proof}
Since $U$ is a $d$-sparse circuit, it contains at most $kd$ two-qubit gates
that couple some qubit of $A$ and some qubit of $B$. 
Let $G_1,\ldots,G_m$ be the list of all such gates, where $m\le kd$.
Any  two-qubit gate $G[i,j]$ acting on qubits $i\in A$ and $j\in B$ can be expanded in the Pauli basis as 
$G[i,j]=\sum_{\alpha=1}^{16} c_\alpha P_{\alpha}[i]\otimes P_{\alpha}[j]$,
where $P_\alpha\in \{ I,X,Y,Z\}$ are Pauli operators 
and $c_\alpha$ are some complex coefficients such that  $\sum_\alpha |c_{\alpha}|^2 = 1$.
Applying the above decomposition to each gate $G_1,\ldots,G_m$ 
and, if necessary, appending dummy identity gates to make $m=kd$, 
one arrives at Eq.~(\ref{U=VW}). Note that replacing a two-qubit gate in $U$ by 
a tensor product of two single-qubit Pauli gates cannot increase the sparsity of the circuit.
Thus each term $V_\alpha\otimes W_\alpha$ is a tensor 
product of  two  $d$-sparse circuits. 
\end{proof}
The classical post-processing step can be described by a 
$poly(n)$ classical circuit $f\, : \, \Sigma^{n+k}\to \{0,1\}$.
By definition of the SQC model, the final output
of a computation is a single random bit $b_{out}=f(x)$, where $x\in \Sigma^{n+k}$ is the bit string
obtained by measuring each qubit of a state $U|0^{n+k}\rangle$ in the $0,1$ basis.
Let $\pi(U)$ be the probability of the output $b_{out}=1$, that is, 
\[
\pi(U)=\langle 0^{n+k} |U^\dag \Pi U |0^{n+k}\rangle, 
\]
\begin{equation}
\Pi=\sum_{x\, : \, f(x)=1} |x\rangle\langle x|.
\end{equation}
Let us first show how to estimate the quantity $\pi(U)$ with a small additive error
using  $dk$-sparse circuits on $n+1$ qubits. 
Substituting Eq.~(\ref{U=VW}) into the definition of $\pi(U)$ one gets
\begin{equation}
\label{p2}
\pi(U)=\sum_{y\in \Sigma^k} \sum_{\alpha,\beta=1}^\chi c_\alpha(y) \overline{c_\beta(y)} \langle \phi_\alpha|\Pi(y)|\phi_\beta\rangle,
\end{equation}
where 
\[
c_\alpha(y)=c_\alpha \langle y|V_\alpha|0^k\rangle,
\quad \quad 
|\phi_\alpha\rangle=W_\alpha|0^n\rangle,
\]
and 
\[
\Pi(y)=\sum_{\substack{z\in \Sigma^n\\ f(yz)=1\\ }}\; |z\rangle\langle z|.
\]
We claim that 
each coefficient $c_\alpha(y)$ can be computed exactly in time $O(kd\cdot 2^k)$.
Indeed, we can merge consecutive 
single-qubit gates  of $V_\alpha$ such that
each qubit is acted upon by at most $d$ two-qubit gates and at most $d+1$ single-qubit gates. 
Thus we can assume that the total number of gates in $V_\alpha$ 
is $O(kd$). One can compute the quantity $\langle y|V_\alpha|0^k\rangle$  classically 
in time $O(kd\cdot 2^k)$ by performing matrix-vector multiplication 
for each gate of $V_\alpha$. Furthermore, it is clear from the proof of Lemma~\ref{Lemma:expandGates}
that each coefficient $c_\alpha$ can be computed in time $O(kd)$. 

Consider some fixed triple $(y,\alpha,\beta)$ that appears in the sum Eq.~(\ref{p2}).
Define a controlled-$W$ operator 
\[
\Lambda(W)=|0\rangle\langle 0|\otimes W_\alpha  + |1\rangle\langle 1|\otimes W_\beta.
\]
Define a quantum circuit $R$ acting on $n+1$ qubits that 
consists of the following steps:
(i) initialize $n+1$ qubits in the $|0\rangle$ state, 
(ii) apply $H$ gate to the first qubit, 
(iii) apply $\Lambda(W)$ with the first qubit acting as the control one, (iv) apply $H$ gate to the first qubit, 
(v) measure each qubit in the $0,1$-basis. 
The construction of $R$, illustrated  at Fig.~\ref{fig:swap}, is very similar to the standard
SWAP test, except that we finally measure each qubit. 
Let $b,z$ be the measurement outcomes, where $b=0,1$ and $z\in \Sigma^n$,
see Fig.~\ref{fig:swap}. 
Define a random variable $\sigma'_{y,\alpha,\beta}$  taking values
$\pm 1$ such that $\sigma'_{y,\alpha,\beta}=1$ iff $b=0$ and $f(yz)=1$.
Otherwise $\sigma'_{y,\alpha,\beta}=-1$.
A simple algebra shows that 
\begin{equation}
\label{real_part}
\mathrm{Re}(\langle \phi_\alpha|\Pi(y)|\phi_\beta\rangle )=\EE(\sigma'_{y,\alpha,\beta}),
\end{equation}
that is $\sigma'_{y,\alpha,\beta}$ is an unbiased estimator of the real part
of $\langle \phi_\alpha|\Pi(y)|\phi_\beta\rangle$.
We claim that one can get a sample of $\sigma'_{y,\alpha,\beta}$
by executing a single instance of a
$dk$-sparse quantum computation on $n+1$ qubits (with certain special properties). 
Indeed, by construction, the circuits $W_\alpha$ and $W_\beta$ 
can be obtained from each other by changing some subset of at most $kd$
single-qubit Pauli gates. Thus
the controlled circuit  $\Lambda(W)$ only needs control for at most $kd$
single-qubit Pauli gates. This shows that the control qubit participates in at most
$kd$ two-qubit gates. Furthermore, since all locations where $W_\alpha$ and $W_\beta$
differ from each other originate from two-qubit gates in the initial $d$-sparse circuit
$U$, we conclude that the circuit $R$ has a special property
that  all qubits except for the control one participate in at most $d$
two-qubit gates. 
One can similarly define a random variable $\sigma''_{y,\alpha,\beta}$
such that 
\[
\mathrm{Im}(\langle \phi_\alpha|\Pi(y)|\phi_\beta\rangle )=\EE(\sigma''_{y,\alpha,\beta}).
\]
The only difference is that the $H$ gate in the circuit $R$ must be replaced by $HS$ gate. 
We conclude that 
\[
\langle \phi_\alpha|\Pi(y)|\phi_\beta\rangle = \EE(\sigma'_{y,\alpha,\beta}) + i  \EE(\sigma''_{y,\alpha,\beta}).
\]
Thus the quantity $\pi(U)$ has an unbiased estimator
\[
\xi \equiv \sum_{y\in \Sigma^k} \sum_{\alpha,\beta=1}^\chi c_\alpha(y) \overline{c_\beta(y)}
\left( \sigma'_{y,\alpha,\beta} + i \sigma''_{y,\alpha,\beta}\right),
\]
that is, $\pi(U)=\EE(\xi)$. Using the bounds $|c_\alpha(y)|\le |c_\alpha|$
and $\sum_{\alpha=1}^\chi |c_\alpha|^2=1$ one gets 
\[
 |\xi|\le 2\sum_{y\in \Sigma^k} \sum_{\alpha,\beta=1}^\chi |c_\alpha(y) \overline{c_\beta(y)}|
\le 2^{k+1} \chi
\]
with probability one. 
By Hoeffding's inequality, one can estimate $\EE(\xi)$ with a small additive error
by generating $c2^{2k} \chi^2$ samples of $\xi$ for some constant $c=O(1)$. 
Generating each sample of $\xi$ requires $2^k \chi^2$ samples of the $\sigma$-variables.
Thus one can estimate $\pi(U)$ by repeated applications of $dk$-sparse circuits
on $n+1$ qubits with the number of repetitions scaling as $c2^{3k} \chi^4 = c2^{16kd +3k}=2^{O(kd)}$.

Recall that  the $dk$-sparse circuits $R$ constructed above 
have a very special pattern of sparsity. Namely, all qubits except for one participate in
at most $d$ two-qubit gates, whereas one remaining qubit 
participates in at most $kd$ two-qubit gates. We can distribute the sparsity more evenly among all
$n+1$  qubits by performing a swap gate that changes position of the control qubit after each
application of a control gate (this is possible only if $n$ is sufficiently large, specifically,
if $n\ge kd+1$).  After this modification one obtains
an equivalent circuit which is $(d+3)$-sparse. 

Finally, we can apply exactly the same arguments as above if the 
subsets $A$  and $B$ in Lemma~\ref{Lemma:expandGates}
have size $|A|=k+1$ and $|B|=n-1$. This frees up one extra qubit
that can play the role of the control one in the above construction.
Now we can estimate $\pi(U)$ by repeated applications of $(d+3)$-sparse circuits
on $n$ qubits with the number of repetitions scaling as $c2^{16(k+1)d +3(k+1)}=2^{O(kd)}$.
This completes the proof of Theorem~\ref{thm:4}.

\begin{figure}[htbp]
\includegraphics[width=7cm]{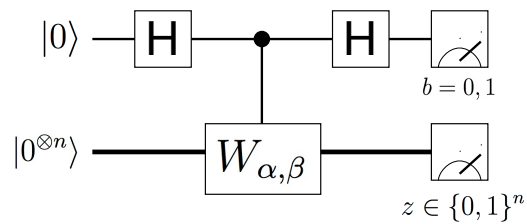}
\caption{Quantum circuit $R$ used to estimate the real part of $\langle \phi_\alpha|\Pi(y)|\phi_\beta\rangle$
in Eq.~(\ref{p2}).
The final output of the circuit is a random variable $\sigma'_{y,\alpha,\beta}=\pm 1$
such that $\sigma'_{y,\alpha,\beta}=1$ iff $b=0$ and $f(yz)=1$, where
$f\, : \, \{0,1\}^{n+k}\to \{0,1\}$ is the Boolean function  describing post-processing step in the original circuit 
$U$ on $n+k$ qubits. We construct a circuit $R$ as above for each triple
$(y,\alpha,\beta)$ with $y\in \{0,1\}^k$ and $\alpha,\beta=1,\ldots,\chi$.
A simple algebra shows that $\sigma'_{y,\alpha,\beta}$
is an unbiased estimator of $\mathrm{Re}(\langle \phi_\alpha|\Pi(y)|\phi_\beta\rangle)$.
}
\label{fig:swap}
\end{figure}


\section{Stabilizer rank and classical simulation of PBC}
\label{sec:srank}

In this section we prove Theorem~\ref{thm:3}.
We begin with an algorithm for computing 
a quantity  $\langle \psi|\Pi|\phi\rangle$,
where $\psi,\phi$ are $n$-qubit stabilizer states  and $\Pi$ is a projector onto the
codespace of some stabilizer code. 
We note that several previous works addressed the problem of computing the
inner product $\langle \psi|\phi\rangle$ between stabilizer states
$\psi,\phi$. 
In particular, Aaronson and  Gottesman~\cite{aaronson2004improved}
showed that  the magnitude $|\langle \psi|\phi\rangle|$can be computed in time $O(n^3)$.
Furthermore,  Garcia, Markov, and Cross~\cite{Garcia2014geometry}
used canonical form of Clifford circuits to compute both the
magnitude and the phase of $\langle \psi|\phi\rangle$  in time $O(n^3)$.
Below we describe a technically different (and somewhat simpler) algorithm 
which is more suited  for computing the quantity $\langle \psi|\Pi|\phi\rangle$ as above.

Let $\ZZ_m\equiv \{0,1,\ldots,m-1\}$
be the cyclic group of order $m$.
A function $f\, : \, \FF_2^n \to \ZZ_8$ is called a {\em degree-two polynomial} if
\[
f(x_1,\ldots,x_n)=f_\emptyset + 2\sum_{a=1}^n f_a x_a + 4 \sum_{1\le a<b\le n} f_{a,b} x_a x_b
\]
where $f_\emptyset\in \ZZ_8$, $f_a\in \ZZ_4$, and $f_{a,b}\in \ZZ_2$
are some constant coefficients. 
Define 
\[
\langle f \rangle = \sum_{x\in \FF_2^n} \omega^{f(x)}, \quad \quad \omega\equiv e^{i\pi/4}.
\]
\begin{lemma}
\label{lemma:sum}
Let $f\, : \, \FF_2^n \to \ZZ_8$ be a degree-two polynomial.
Then either $\langle f\rangle=0$ or $\langle f\rangle=2^{p/2} \omega^m$ for some integer $n\le p\le 2n$
and some $m\in \ZZ_8$. Furthermore, one can compute  $\langle f\rangle$ in time
$O(n^3)$.
\end{lemma}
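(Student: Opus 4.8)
The plan is to compute $\langle f\rangle$ by a Gaussian-elimination-style recursion that strips off one variable at a time, each time extracting an explicit prefactor of the form $2^{1/2}\omega^{\pm1}$, until what remains is an ordinary $\pm1$-valued quadratic Gauss sum over $\FF_2$ whose value is governed by the rank of an alternating bilinear form.

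To set up the recursion I would isolate one variable. For any variable $x_a$, write $f(x)=\bar f(x')+x_a\,\nu(x')$, where $x'$ collects the remaining $n-1$ variables, $\bar f:=f|_{x_a=0}$ is again a degree-two polynomial, and $\nu(x')=f|_{x_a=1}-\bar f=2f_a+4\sum_{b\ne a}f_{\{a,b\}}x_b$. Summing over $x_a\in\{0,1\}$ and using $\omega^2=i$ gives
\[
\langle f\rangle=\sum_{x'}\omega^{\bar f(x')}\bigl(1+\omega^{\nu(x')}\bigr)=\sum_{x'}\omega^{\bar f(x')}\bigl(1+i^{f_a}(-1)^{\mu(x')}\bigr),
\]
where $\mu(x')=\sum_{b\ne a}f_{\{a,b\}}x_b$ is an $\FF_2$-linear form. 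The key case is when the linear coefficient $f_a\in\ZZ_4$ is odd: then $1+i^{f_a}(-1)^{\mu(x')}\in\{1+i,1-i\}$ never vanishes, and a direct check gives $1+i^{f_a}(-1)^{\mu(x')}=\sqrt2\,\omega^{\delta(1-2\mu(x'))}$, with $\delta=+1$ if $f_a\equiv1$ and $\delta=-1$ if $f_a\equiv3\pmod4$. Hence
\[
\langle f\rangle=2^{1/2}\,\omega^{\delta}\sum_{x'\in\FF_2^{n-1}}\omega^{\bar f(x')-2\delta\mu(x')}=2^{1/2}\,\omega^{\delta}\,\langle f'\rangle,
\]
where $f':=\bar f-2\delta\mu$ is still a degree-two polynomial on $n-1$ variables, since subtracting $2\delta\mu$ only updates the linear coefficients $f_b\mapsto f_b-\delta f_{\{a,b\}}\in\ZZ_4$. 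So whenever some linear coefficient is odd, the problem shrinks by one variable at the cost of a factor $2^{1/2}\omega^{\pm1}$.

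Iterating, after some $j\le n$ steps one reaches a degree-two polynomial $g$ on $\ell:=n-j$ variables all of whose linear coefficients lie in $\{0,2\}$. Then $g=g_\emptyset+4q$ with $q\colon\FF_2^\ell\to\ZZ_2$ an ordinary quadratic form, so $\langle g\rangle=\omega^{g_\emptyset}\sum_{x\in\FF_2^\ell}(-1)^{q(x)}$, the classical binary quadratic Gauss sum. Diagonalizing the alternating form $B_q(u,v)=q(u+v)+q(u)+q(v)$ by Gaussian elimination over $\FF_2$, with $2r:=\operatorname{rank}B_q$ (so $0\le r\le\lfloor\ell/2\rfloor$), this sum equals $0$ if $q$ restricts to a nonzero linear form on the radical of $B_q$, and $\pm2^{\ell-r}$ otherwise (the sign being the Arf invariant, recoverable from the normal form). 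Multiplying the collected prefactors back in, either $\langle f\rangle=0$ or $\langle f\rangle=\pm2^{\,j/2+\ell-r}\,\omega^{m}=2^{p/2}\omega^{m}$ with $p=2n-j-2r$ and some $m\in\ZZ_8$. Since $j\ge0$ we get $p\le2n$; since $2r\le\ell=n-j$ we get $p\ge n$; and $p$ is an integer. For the running time there are at most $n$ reduction steps, each a local update of the $O(n^2)$ stored coefficients, and the terminal evaluation is one Gaussian elimination over $\FF_2$ costing $O(n^3)$, so the total is $O(n^3)$.

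The only genuine obstacle beyond this bookkeeping is the terminal step: one must recall, or reprove, the exact value of $\sum_x(-1)^{q(x)}$ — including its sign — for a quadratic form over $\FF_2$ whose bilinear form may be degenerate, and confirm that the relevant data (the rank of $B_q$, the restriction of $q$ to the radical, and the Arf invariant of the nondegenerate part) can all be read off from a single $O(n^3)$ Gaussian elimination. Once that formula is in hand, the bound $n\le p\le2n$ and the $O(n^3)$ running time follow exactly as indicated above.
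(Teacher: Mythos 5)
Your overall strategy is sound and lands on the same terminal object as the paper (a $\pm1$-valued quadratic Gauss sum over $\FF_2$, evaluated by block-diagonalizing the associated alternating form with one $O(n^3)$ Gaussian elimination), and your accounting $p=2n-j-2r$ with $2r\le n-j$ correctly yields $n\le p\le 2n$. However, there is a concrete error in your reduction step. You define $\mu(x')=\sum_{b\ne a}f_{a,b}x_b$ and use the identity $1+i^{f_a}(-1)^{\mu}=\sqrt2\,\omega^{\delta(1-2\mu)}$, which is valid only when $\mu$ is the \emph{mod-$2$ reduction} of that sum (for the integer value $\mu=2$ the left side is $1+i^{f_a}$ while the right side is $\sqrt2\,\omega^{-3\delta}$, and these differ). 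But then you assert that $f'=\bar f-2\delta\mu$ "only updates the linear coefficients $f_b\mapsto f_b-\delta f_{a,b}$". That polynomial computes $-2\delta$ times the \emph{integer} sum, not $-2\delta$ times its mod-$2$ reduction, and the two disagree modulo $8$: writing $\bar\mu$ for the reduction, $2\delta(\mu-\bar\mu)=4\delta\lfloor\mu/2\rfloor$ is not a multiple of $8$ in general. A two-variable counterexample: $f=2x_3+4x_1x_3+4x_2x_3$ stripped at $a=3$ gives, at $x_1=x_2=1$, a true contribution $\omega^0=1$, whereas your $f'=-2x_1-2x_2$ gives $\omega^{-4}=-1$.

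The fix is the lift $\bar\mu\equiv\sum_b f_{a,b}x_b-2\sum_{b<c}f_{a,b}f_{a,c}x_bx_c\pmod 4$, so that
\begin{equation*}
-2\delta\bar\mu\;\equiv\;-2\delta\sum_{b\ne a}f_{a,b}x_b\;+\;4\delta\sum_{b<c}f_{a,b}f_{a,c}\,x_bx_c \pmod 8 ,
\end{equation*}
i.e.\ the elimination of $x_a$ must also update the \emph{quadratic} coefficients by $f_{b,c}\mapsto f_{b,c}+f_{a,b}f_{a,c}\pmod 2$. With this correction $f'$ is again a degree-two polynomial, each elimination step still costs $O(n^2)$, and the rest of your argument (termination after $j\le n$ steps, the Gauss-sum endgame, the bounds on $p$, the $O(n^3)$ total) goes through. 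It is worth noting that this is precisely the subtlety the paper's proof confronts head-on: instead of eliminating odd-coefficient variables one at a time, it collects them all into a single new variable $y_n=\sum_{a\in S}x_a$ and uses the identity $i^{\sum_{a\in S}x_a}=i^{\sum_{a\in S}x_a\bmod 2}\cdot(-1)^{\sum_{a<b\in S}x_ax_b}$, whose quadratic correction term is exactly the analogue of the one missing from your update; it then splits on $y_n=0,1$ to get $\langle f\rangle=S_0+iS_1$ with two real Gauss sums sharing one quadratic part. Your one-variable-at-a-time recursion is a legitimate alternative once the quadratic update is included. Finally, you defer the exact evaluation of $\sum_x(-1)^{q(x)}$ for a possibly degenerate form; that step is standard and is carried out explicitly in the paper, so you should either cite it or reproduce the short computation rather than leave it as an acknowledged obstacle.
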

Since the proof is rather straightforward, we postpone it until  Appendix~A.
It was shown by Dehaene and De Moor~\cite{dehaene2003clifford}
and by Van den Nest~\cite{nest2008classical} that any stabilizer state  $\psi$ of $n$-qubit
can be written (up to a global phase and a normalization) as
\begin{equation}
\label{stab_def}
|\psi\rangle =\sum_{u\in \FF_2^k} \omega^{f(u)} |z+u\Psi\rangle,
\end{equation}
for some degree-two polynomial $f\, : \, \FF_2^k\to \ZZ_8$, some $k\times n$ binary matrix $\Psi$,
and some vector $z\in \FF_2^n$. Here we treat $u$ and $z$ as row vectors.
In the rest of this section we  take Eq.~(\ref{stab_def}) as our definition of a stabilizer state.

Let $\cG\subset \calP^n$ be an abelian group  with $t$ independent generators $P_1,P_2,\ldots,P_t\in \cG$.
Define a projector $\Pi$ onto the $\cG$-invariant subspace,
\[
\Pi=2^{-t} \sum_{P\in \cG} P.
\]
\begin{lemma}
\label{lemma:proj}
The action of $\Pi$ in the computational basis 
can be represented as
\[
\Pi |x\rangle =2^{-t} \sum_{y\in \FF_2^t}  \omega^{g(y)} (-1)^{yBx^T} |x+yA\rangle
\]
for some degree-two polynomial $g\, : \, \FF_2^t \to \ZZ_8$
and some binary matrices $A,B$ of size $t\times n$. 
\end{lemma}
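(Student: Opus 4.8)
The plan is to build the generators of $\cG$ explicitly as binary data and then track how the sum $\Pi = 2^{-t}\sum_{P\in\cG} P$ acts on a fixed basis vector $\ket{x}$. Recall every hermitian Pauli operator $P\in\calP^n$ can be written, up to sign, as $P = i^{\lambda} X(a) Z(b)$ for row vectors $a,b\in\FF_2^n$, where $X(a)=\bigotimes_j X_j^{a_j}$ and similarly for $Z(b)$; here the phase is fixed by hermiticity so that $P = (-1)^{\epsilon} \, i^{a\cdot b} X(a)Z(b)$ for some $\epsilon\in\FF_2$ (the $i^{a\cdot b}$ factor compensates the $Y$'s). First I would fix the $t$ independent generators $P_1,\dots,P_t$, write $P_r = (-1)^{\epsilon_r} i^{a_r\cdot b_r} X(a_r)Z(b_r)$, collect the $a_r$ as the rows of a $t\times n$ matrix $A$ and the $b_r$ as the rows of a $t\times n$ matrix $B$, and record the sign bits $\epsilon_r$.

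Next I would compute the action of a product of generators. For $y=(y_1,\dots,y_t)\in\FF_2^t$ set $P^y := P_1^{y_1}\cdots P_t^{y_t}\in\cG$; since the $P_r$ commute and are independent, $y\mapsto P^y$ is a bijection $\FF_2^t\to\cG$, so $\Pi\ket{x} = 2^{-t}\sum_{y\in\FF_2^t} P^y\ket{x}$. Using $X(a')X(a'')=X(a'+a'')$, $Z(b')Z(b'')=Z(b'+b'')$ and the commutation rule $Z(b)X(a) = (-1)^{a\cdot b} X(a)Z(b)$, an induction on the number of factors shows $P^y = \mu(y)\, X(yA) Z(yB)$, where $\mu(y)$ is a power of $i$ (hence an eighth root of unity $\omega^{g(y)}$) whose exponent is a degree-$\le 2$ function of $y$: the reordering of the $X$'s past the $Z$'s contributes a quadratic term $\sum_{r<s} y_r y_s\, (a_s\cdot b_r)$ read mod $2$, the product of the individual $i^{a_r\cdot b_r}$ phases contributes a linear term, and the sign bits $\epsilon_r$ contribute another linear term; expanding $i^{(yA)\cdot(yB)}$ to rewrite things into the hermitian normal form $i^{\text{(diagonal)}}X(yA)Z(yB)$ introduces one more quadratic correction. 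Collecting all of this, the total phase is $\omega^{g(y)}$ with $g:\FF_2^t\to\ZZ_8$ a degree-two polynomial in the sense defined just before the lemma. Then $Z(yB)\ket{x} = (-1)^{(yB)\cdot x}\ket{x} = (-1)^{yBx^T}\ket{x}$ and $X(yA)\ket{x} = \ket{x+yA}$, which gives exactly
\[
\Pi\ket{x} = 2^{-t}\sum_{y\in\FF_2^t} \omega^{g(y)} (-1)^{yBx^T} \ket{x+yA},
\]
as claimed.

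The one genuinely fiddly point — and the step I expect to be the main obstacle — is bookkeeping the phases carefully enough to be sure that $g$ is genuinely quadratic (degree $\le 2$) and lands in $\ZZ_8$ rather than requiring sixteenth roots. The potential worry is that composing many Paulis could in principle accumulate phases that are cubic in $y$, but this does not happen: every pairwise reordering contributes at most a sign $(-1)^{a_s\cdot b_r}$, there are $\binom{t}{2}$ such pairs, and no higher interactions arise because $X$'s commute among themselves and $Z$'s commute among themselves; the only nonabelian structure is the single $XZ$ swap. So the quadratic terms come exactly from pairs $(r,s)$ and everything else is linear or constant. Making this rigorous is a routine but slightly delicate induction, using the standard fact that the symbol map $\calP^n/\{\pm1,\pm i\} \to \FF_2^{2n}$ is a homomorphism and that hermiticity pins the residual $i$-phase to $i^{a\cdot b}$. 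Once the induction is set up the rest is immediate. I would relegate the explicit phase computation and the formulas for the coefficients of $g$ in terms of $A$, $B$, and the $\epsilon_r$ to the same appendix (Appendix~A) that handles Lemma~\ref{lemma:sum}, since it is the same style of calculation.
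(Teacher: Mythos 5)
Your proposal is correct and follows essentially the same route as the paper: write each generator in the normal form $P_k=i^{c_k}X(e^kA)Z(e^kB)$, observe that the product over $\{k:y_k=1\}$ equals $\omega^{g(y)}X(yA)Z(yB)$ with the quadratic part of $g$ coming from the pairwise $Z$-past-$X$ reorderings (the $(BA^T)_{k,l}y_ky_l$ terms) and the linear part from the individual phases, then apply $X(yA)Z(yB)$ to $\ket{x}$. The phase bookkeeping you flag as the delicate point is exactly the one-line computation the paper records, and since all phases are powers of $i$ no sixteenth roots can arise.
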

\begin{proof}
Given a binary vector $f\in \FF_2^n$, let $X(f)\in \calP^n$ be 
the  Pauli operator that applies $X$ to each qubit in the support of $f$. 
Define $Z(f)$ in a similar fashion. 
Let $e^k\in \FF_2^t$ be the basis vector which has a single `1' at the
position $k$. The $k$-th generator of $\cG$ can be written as $P_k=i^{c_k} X(e^k A)Z(e^kB)$ for some $c_k\in \ZZ_4$
and some binary matrices $A,B$ of size $t\times n$. In other words,
the $k$-th row of $A$ (of $B$) specifies the $X$-part (the $Z$-part) of $P_k$. 
Choose any vector $y\in \FF_2^t$. 
Then 
\[
P(y)\equiv \prod_{k\, : \, y_k=1} P_k=\omega^{g(y)} X(yA) Z(yB),
\]
where
\[
\omega^{g(y)} = i^{\sum_{k=1}^t c_k y_k} \cdot (-1)^{\sum_{1\le k<l\le t} (BA^T)_{k,l} y_k y_l}.
\]
Clearly, $g\, : \, \FF_2^t\to \ZZ_8$ is a degree-two polynomial. 
Thus
\[
2^t \Pi |x\rangle= \sum_{y\in \FF_2^t} P(y) |x\rangle = \sum_{y\in \FF_2^t} \omega^{g(y)} (-1)^{yBx^T} |x+yA\rangle.
\]
\end{proof}
Consider now a pair of $n$-qubit stabilizer states $\psi,\phi$, where
$\psi$ is defined in Eq.~(\ref{stab_def}) and 
\begin{equation}
\label{paris}
|\phi\rangle =\sum_{v\in \FF_2^m} \omega^{h(v)} |z'+v\Phi \rangle.
\end{equation}
Here $h\, :\, \FF_2^m \to \ZZ_8$ is a degree-two polynomial,
$\Phi$ is a binary matrix of size $m\times n$, and $z'\in \FF_2^n$ is some vector.
 Using Lemma~\ref{lemma:proj} and Eqs.~(\ref{stab_def},\ref{paris}) one gets
\begin{align*}
\langle \psi|\Pi|\phi\rangle
&= 2^{-t}\sum_{u,v,y} \omega^{h(v)-f(u)+g(y)} \\
& \cdot (-1)^{yB(z'+v\Phi)^T} \langle z+u\Psi|z'+v\Phi+yA\rangle.
\end{align*}
Clearly the non-zero terms are those with $z+u\Psi=z'+v\Phi+yA$.
We can enforce this equality by introducing an extra variable 
$x\in \FF_2^n$ such that 
\[
\langle z+u\Psi |z'+v\Phi+yA\rangle
=2^{-n} \sum_{x\in \FF_2^n}  (-1)^{x(z+u\Psi+z'+v\Phi+yA)}.
\]
Then
\begin{equation}
\label{pasadena}
\langle \psi|\Pi|\phi\rangle 
=2^{-n+t}\sum_{u,v,x,y} \omega^{F(u,v,x,y)}=2^{-n+t}\langle F\rangle
\end{equation}
with
\begin{align*}
F(u,v,x,y)&=h(v)-f(u)+g(y)
+4yB(z'+v\Phi)^T \\
&+ 4x(z+u\Psi+z'+v\Phi+yA).
\end{align*}
Note that $F(u,v,x,y)$ is a degree-two polynomial in $k+m+n+t$ variables.
By Lemma~\ref{lemma:sum}, one can compute the sum $\langle F\rangle$
in time $O(k+m+t+n)^3=O(n^3)$. 
Also, Lemma~\ref{lemma:sum} and Eq.~(\ref{pasadena})  implies that $\langle\psi |\Pi|\phi\rangle$ 
takes values $2^{q/2} \omega^j$ for some integer $q$ and $j\in \ZZ_8$. 

Consider now a PBC on $n$ qubits as defined in Section~\ref{sec:PBC}.
Let $t$ be some fixed time step.
Recall that a sequence of measurement outcomes $\sigma_1,\ldots,\sigma_t$
is observed with the probability 
\[
\mathrm{Pr}(\sigma_1,\ldots,\sigma_t)=\langle H^{\otimes n} | \prod_{k=1}^t (1/2) (I+\sigma_k P_k) |H^{\otimes n}\rangle.
\]
Below we shall construct an algorithm that  takes as input 
a step $t$, a sequence of outcomes 
$\sigma_1,\ldots,\sigma_t$
and returns  $\mathrm{Pr}(\sigma_1,\ldots,\sigma_t)$. 
It allows us to compute $\mathrm{Pr}(\sigma_1)$ and 
get a sample of $\sigma_1$ by flipping a coin with a properly chosen bias. 
By calling the algorithm twice  one can also compute conditional  probabilities 
\[
\mathrm{Pr}(\sigma_t| \sigma_1,\ldots,\sigma_{t-1})=\frac{\mathrm{Pr}(\sigma_1,\ldots,\sigma_t)}
{\mathrm{Pr}(\sigma_1,\ldots,\sigma_{t-1})}.
\]
Thus, for fixed variables $\sigma_1,\ldots,\sigma_{t-1}$ 
one can get a sample of $\sigma_t$  by computing the conditional probability 
$\mathrm{Pr}(\sigma_t| \sigma_1,\ldots,\sigma_{t-1})$ and flipping a coin with a 
properly chosen bias. 
The ability to sample the outcomes $\sigma_1,\ldots,\sigma_n$
from the distribution $\mathrm{Pr}(\sigma_1,\ldots,\sigma_n)$
is equivalent to simulating the PBC classically.
Hence it suffices to construct an algorithm
that computes $\mathrm{Pr}(\sigma_1,\ldots,\sigma_t)$.

Suppose we are given some 
integers $k,\chi=O(1)$ and a decomposition
\begin{equation}
\label{london}
|H^{\otimes k}\rangle =\sum_{\alpha=1}^\chi c_\alpha |\phi_\alpha\rangle
\end{equation}
where $\phi_a$ are $k$-qubit stabilizer states
and $c_a$ are complex coefficients.  Suppose also that $n=mk$ for some integer $m$. 
Taking the $m$-fold tensor power of Eq.~(\ref{london}) one gets
\begin{equation}
\label{berlin}
|H^{\otimes n}\rangle=\sum_{\mathbf{a}=1}^{\chi^m} c_{\mathbf{a}} |{\phi}_{\mathbf{a}}\rangle,
\end{equation}
where $\mathbf{a}=(\alpha_1,\ldots,\alpha_m)$, $c_{\mathbf{a}}=c_{\alpha_1} \cdots c_{\alpha_m}$,
and ${\phi}_{\mathbf{a}}= \phi_{\alpha_1}\otimes \cdots \otimes \phi_{\alpha_m}$.
Note that ${\phi}_{\mathbf{a}}$ are stabilizer states and for a given index $\mathbf{a}$ one can compute the standard
form of ${\phi}_{\mathbf{a}}$ as defined in Eq.~(\ref{stab_def}) in time $O(n)$. 
Denoting
\[
\Pi=2^{-t} \prod_{k=1}^t (I+\sigma_k P_k) 
\]
we get 
\begin{equation}
\label{madrid}
\mathrm{Pr}(\sigma_1,\ldots,\sigma_t)=\sum_{\mathbf{a},\mathbf{b}=1}^{\chi^m}
 \overline{c_{\mathbf{a}}} c_{\mathbf{b}} \langle {\phi}_{\mathbf{a}} |\Pi|{\phi}_\mathbf{{b}}\rangle.
\end{equation}
The discussion above implies that each term $\langle {\phi}_{\mathbf{a}} |\Pi|{\phi}_{\mathbf{b}}\rangle$ can be
computed exactly in time $O(n^3)$. Assuming that arithmetic operations with complex
numbers have a unit cost (see Remark~1 below), the probability  $\mathrm{Pr}(\sigma_1,\ldots,\sigma_t)$
can be computed 
in time $O(\chi^{2m} n^3)=O(\chi^{2n/k} n^3)$.

Let us now show an explicit  decomposition Eq.~(\ref{london}) with $k=6$ and $\chi=7$.
This gives an algorithm for computing $\mathrm{Pr}(\sigma_1,\ldots,\sigma_t)$
with a running time $O(7^{n/3} n^3)$ which is enough to prove Theorem~\ref{thm:3}.
It will be more convenient to normalize the magic state
such that 
\[
|H\rangle =|0\rangle + t|1\rangle, \quad \quad t=\tan{(\pi/8)}=\sqrt{2}-1.
\]
Let $B_n=\FF_2^n$ be the set of all $n$-bit strings and
$B_{n,k}\subset B_n$ be the subset of strings with the Hamming weight exactly $k$.
Let $B_n=E_n\cup O_n$,  where $E_n$ and $O_n$ are the subsets of 
even-weight and odd-weight strings respectively.
Given a set of bit strings $S$, we shall write $|S\rangle=\sum_{x\in S}|x\rangle$ for the
uniform superposition of all strings in $S$. 
For example, $|B_{n,0}\rangle=|0^{\otimes n}\rangle$,
$|B_{n,n}\rangle =|1^{\otimes n}\rangle$, and $|H^{\otimes n}\rangle=\sum_{k=0}^n t^k |B_{n,k}\rangle$.
Define also a  state
\[
|K_n\rangle = \sum_{x\in B_n} (-1)^{|x| (|x|-1)/2} \, |x\rangle=\prod_{i<j} \Lambda(Z)_{i,j} |B_n\rangle
\]
Note that $|B_{n,0}\rangle$, $|B_{n,n}\rangle$, $|E_n\rangle$,  $|O_n\rangle$, and $|K_n\rangle$ are stabilizer states
as defined by Eq.~(\ref{stab_def}). 
Define also a pair of graphs $G'=(V',E')$ and $G''=(V'',E'')$ with six vertices shown on Fig.~\ref{fig:graphs}.
The desired stabilizer decomposition of $|H^{\otimes 6}\rangle$ is
\begin{widetext}
 \begin{eqnarray}
\label{H6}
|H^{\otimes 6}\rangle&=&(-16+12\sqrt{2})|B_{6,0}\rangle + (96-68\sqrt{2})|B_{6,6}\rangle
+(10-7\sqrt{2})|E_6\rangle  + (-14+10\sqrt{2})|O_6\rangle   \nonumber \\
&&
+ (7 -5\sqrt{2})Z^{\otimes 6} |K_6\rangle + (10-7\sqrt{2}) |\phi'\rangle + (10-7\sqrt{2})  |\phi''\rangle.
\end{eqnarray}
where
\[
|\phi'\rangle= \prod_{(i,j)\in E'}  \Lambda(Z)_{i,j}  |O_6\rangle 
\quad 
\mbox{and}
\quad
|\phi''\rangle=\prod_{(i,j)\in E''}  \Lambda(Z)_{i,j} |O_6\rangle.
\]
\end{widetext}
\begin{figure}[htbp]
\includegraphics[width=6cm]{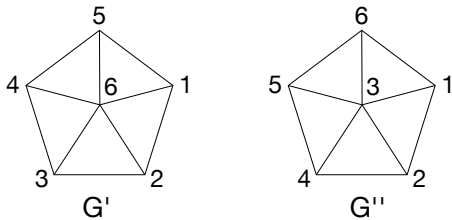}
\caption{Graphs $G'$ and $G''$ used in the definition of stabilizer states $\phi'$ and $\phi''$,
see Eq.~(\ref{H6}).}
\label{fig:graphs}
\end{figure}
This completes the proof of Theorem~\ref{thm:3}.
The numerical method used to find the above decomposition is discussed in Appendix~B.
We conjecture that $k=6$ is the smallest integer such that 
$\chi_k^2<2^k$, see  Section~\ref{sec:intro}.
Accordingly, $k=6$ is likely to be the smallest integer for which the 
 the above  simulation strategy outperforms the brute-force simulation algorithm.

{\em Remark 1:} Let us point out that all coefficients in Eq.~(\ref{H6})
 belong to the 
ring $\ZZ[\sqrt{2}]=\{ p+\sqrt{2}q\, : \, p,q\in \ZZ\}$
known the ring of quadratic integers with  a base two. 
Hence the coefficients $c_{\mathbf{a}}$ in Eq.~(\ref{berlin}) also belong to $\ZZ[\sqrt{2}]$. 
Using Eq.~(\ref{pasadena}) and Lemma~\ref{lemma:sum} we conclude 
that each term 
in Eq.~(\ref{madrid}) has a form $2^{-q} \eta \omega^j$ for some 
integer $0\le q\le n$, some 
$\eta\in \ZZ[\sqrt{2}]$,
and some $j\in \ZZ_8$. Multiplying Eq.~(\ref{madrid})  by 
a suitable power of two we can assume that each term
in Eq.~(\ref{madrid})  has a form $\alpha +i\beta$ where
$\alpha,\beta\in \ZZ[\sqrt{2}]$ (of course we can ignore the 
imaginary part $i\beta$ since $\mathrm{Pr}(\sigma_1,\ldots,\sigma_t)$
is a real number). 
Thus computing the sum in Eq.~(\ref{madrid}) 
only requires arithmetic operations
in the ring $\ZZ[\sqrt{2}]$.

{\em Remark 2:} One can notice that  the first five terms in Eq.~(\ref{H6}) are stabilizer states
symmetric under all permutations of qubits. 
On the other hand, the states $\phi'$ and $\phi''$ break the permutation symmetry. 
Interestingly, we found
that the state $|H^{\otimes 6}\rangle$ does not belong to the subspace
spanned by symmetric stabilizer states of six qubits. Thus any stabilizer decomposition
of $|H^{\otimes 6}\rangle$  must use at least two non-symmetric states.
On the other hand, one can  check that $|H^{\otimes n}\rangle$ belongs
to the subspace spanned by symmetric stabilizer states for $n\le 5$.
The best decompositions that we were able to find for $n\le 5$
are formed by symmetric stabilizer states, see Appendix~B.


\section{Adding virtual qubits to a PBC}
\label{sec:PBC}

In this section we prove Theorems~\ref{thm:1},\ref{thm:2}.
We begin with Theorem~\ref{thm:1}.
Recall that we consider a quantum circuit $U$ on $n$ qubits 
in the Clifford+$T$ basis which contains $m$ $T$-gates.
We assume that all qubits are initialized in the $|0\rangle$ state. 
Each qubit is finally measured in the $0,1$ basis. 
Let us first define a more general version of PBC called
PBC${}^*$ where some subset of qubits 
can be initialized in the $|0\rangle$ state.
Apart from that, definitions of PBC and PBC${}^*$ are the same.
First  we will show that $U$ can be efficiently simulated by PBC${}^*$
on $n+m$ qubits with the initial state $|0^{\otimes n}\rangle \otimes |H^{\otimes m}\rangle$.
Indeed, replace each $T$-gate of $U$ by the gadget shown
on Fig.~\ref{fig:Tgate}.
This gadget uses one ancillary qubit prepared in the magic state
$|T\rangle \sim |0\rangle + e^{i\pi/4}|1\rangle$.
The latter is equivalent to $|H\rangle$ modulo Clifford gates,
$|T\rangle=e^{i\pi/8} H S^{\dag} |H\rangle$.
Let $|\psi\rangle=\alpha |0\rangle + \beta|1\rangle$ be the input
state for the gadget. Let $\sigma_1$ and $\sigma_2$ be the measured
eigenvalues of $ZZ$ and $IX$ operators, see Fig.~\ref{fig:Tgate}.
One can check that 
the gadget outputs a state $\psi_{\sigma_1,\sigma_2}$ where
\begin{eqnarray}
|\psi_{++}\rangle & \sim &  T|\psi\rangle, \nonumber  \\
|\psi_{+-}\rangle &\sim & ZT |\psi\rangle, \nonumber \\
|\psi_{-+}\rangle & \sim & T^{-1} |\psi\rangle, \nonumber \\
|\psi_{--}\rangle &\sim & ZT^{-1} |\psi\rangle. \nonumber
\end{eqnarray}
Furthermore, all  four measurement outcomes are equally likely. 
Applying a correcting Clifford operator
$I,Z,S,ZS$ for the measurement outcomes $++,+-,-+,--$
respectively, one gets the desired $T$ gate. 
Let $U'$ be the circuit obtained from $U$ by replacing each
$T$ gate with the gadget as above. 
\begin{figure}[htbp]
\includegraphics[width=6cm]{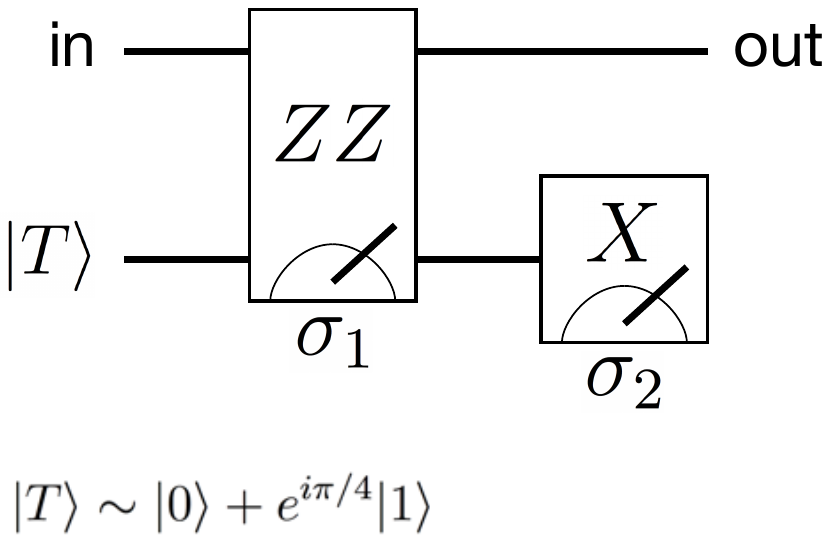}
\caption{Implementation of the 
$T$-gate. }
\label{fig:Tgate}
\end{figure}
The final measurement of $n$ qubits in the $0,1$ basis
is equivalent to a non-destructive eigenvalue measurement
of $Z_1,\ldots,Z_n$ after which  the final state is discarded.
This allows one to commute all Clifford gates of $U'$ towards the end
of the circuit by properly updating which Pauli operator one has to be measured at each step. 
Once a Clifford gate reaches the end of the circuit, it serves no purpose 
and can be discarded.
We conclude that $U$ can be simulated by a PBC${}^*$ on $n+m$ qubits.
Let $P_1,\ldots,P_r\in \calP^{n+m}$ be the Pauli operators that have to be measured.
We can assume that all Pauli operators $P_1,\ldots,P_r$
pairwise commute. Indeed, suppose this is not the case and 
let $t$ be the first time step when $P_t$ anti-commutes
with $P_s$ for some $s<t$. Let $\phi$ be the state reached just before
the measurement of $P_t$. Note that $P_s \phi=\pm \phi$ and 
thus $(I+\sigma_t P_t)\phi=(\sigma_t P_t \pm P_s)\phi$.
One can easily check that an operator $V\equiv (\sigma_t P_t \pm P_s)/\sqrt{2}$ is
a Clifford unitary operator whenever $P_t$ and $P_s$ anticommute. 
This shows that both outcomes $\sigma_t$ have the same probability
and the measurement of $P_t$ has the same effect as drawing
$\sigma_t$  from the uniform distribution and applying 
the Clifford unitary $V$ defined above. Such a unitary $V$ can be commuted towards
the end of the circuit and discarded. 
Hence we can assume that all operators $P_1,\ldots,P_r$ pairwise commute. 
Furthermore, one can append the sequence $P_1,\ldots,P_r$ 
at the beginning with dummy Pauli measurements of $Z_i$
for all qubits $i$ initialized in the $|0\rangle$ state.  Applying the above argument
again one can modify the sequence $P_1,\ldots,P_r$
such that all $P_t$ commute with the dummy measurements, that is, 
any operator $P_t$ acts trivially on the qubits initialized in the $|0\rangle$ state.
Therefore such qubits serve no purpose and can be discarded.
We have shown that the original circuit $U$ can be simulated by a PBC on
$m$ qubits with $r$ steps and pairwise commuting Pauli operators $P_1,\ldots,P_r$.
 Furthermore, since the number of independent
pairwise commuting Pauli operators on $m$ qubits is at most $m$,
we can assume that $r\le m$, that is, the PBC has the standard form. 
This completes the proof of Theorem~\ref{thm:1}.

Let us now prove Theorem~\ref{thm:2}.
Let $\calQ$ be  a fixed PBC  on $n+k$ qubits
and let $p(\calQ)$ be the probability that the final 
outcome of $\calQ$ is $b_{out}=1$. 
Our goal is to approximate $p(\calQ)$ 
on a classical computer assisted by a  PBC on $n$ qubits. 
Suppose one can find a decomposition
\begin{equation}
\label{t2eq1}
|H\rangle\langle H|^{\otimes k}=\sum_{i=1}^\chi \alpha_i |\phi_i\rangle\langle \phi_i|
\end{equation}
for some $k$-qubit stabilizer states $\phi_i$
and some real coefficients $\alpha_i$. 
By linearity, one has 
\begin{equation}
\label{t2e2}
p(\calQ)=\sum_{i=1}^\chi \alpha_i p(\calQ_i),
\end{equation}
where $\calQ_i$ is a PBC-type computation obtained from 
$\calQ$ by initializing the first $k$ qubits in the state $\phi_i$
rather than $|H\rangle^{\otimes k}$.
We note that any stabilizer state $\phi_i$ can be represented
as $|\phi_i\rangle =U_i |0\rangle^{\otimes k}$ for some Clifford
unitary $U_i$. Commuting $U_i$ towards the end of $\calQ_i$
and properly updating which Pauli operator has to be measured at each step
we can assume that $|\phi_i\rangle=|0\rangle^{\otimes k}$ for all $i$.
As we have already showed above,  
such computation $\calQ_i$ is equivalent to a PBC on $n$ qubits.  
Let $b_i$ be the output bit of $\calQ_i$ such that 
$\EE(b_i)=p(\calQ_i)$. 
Define a random variable 
\[
\xi=\sum_{i=1}^\chi \alpha_i b_i.
\]
The above shows that $\xi$ is an unbiased estimator of $p(\calQ)$
and one can generate a sample of $\xi$ by repeating a PBC on $n$ qubits 
$\chi$ times. Since all variables $b_i$ are independent, the 
variance of $\xi$ is bounded as
\begin{equation}
\label{t2eq3}
\sigma^2 \equiv \EE(\xi^2)-\EE(\xi)^2 \le \sum_{i=1}^\chi \alpha_i^2.
\end{equation}
Using the Monte Carlo method one can estimate $p(\calQ)$ with a constant precision 
by generating $M=\min{\{1,O(\sigma^2)\}}$ independent samples of $\xi$.
 Thus the overall cost of adding $k$
virtual qubits is 
\[
C\sim \chi\max{\left(1, \sum_{i=1}^\chi \alpha_i^2\right)}.
\]
It remains to choose a decomposition in Eq.~(\ref{t2eq1}).
One can decompose each copy of
$|H\rangle\langle H|$ as a linear combination of 
stabilizer states using the identity 
\begin{equation}
\label{t2eq4}
|H\rangle\langle H|=\alpha_1 |0\rangle\langle 0| + \alpha_2 |1\rangle\langle 1|
+\alpha_3 |+\rangle\langle +|,
\end{equation} 
where 
\[
\alpha_1=\frac12, \quad \alpha_2=\frac{1-\sqrt{2}}2, \quad \alpha_3=\frac1{\sqrt{2}}
\]
and then take the tensor product decomposition.
Thus  $\chi=3^k$ and $C\sim \chi=2^{O(k)}$.
This completes the proof of Theorem~\ref{thm:2}.


\section*{Appendix~A}

In this section we prove Lemma~\ref{lemma:sum}.
Since the constant term $f_\emptyset$ contributes a multiplicative factor
$\omega^m$ to $\langle f\rangle$, we can assume wlog that $f_\emptyset=0$.
Define coefficients $g_1,\ldots,g_n\in \ZZ_2$ such that 
\[
g_a=\left\{ 
\begin{array}{rcl} 0 &\mbox{if}& f_a=1{\pmod 4},\\
1 &\mbox{if}& f_a=3{\pmod 4},\\
f_a/2 &\mbox{if}& f_a=0,2 {\pmod 4}.\\
\end{array}
\right.
\]
Let $S\subseteq [n]$ be the set of indexes $a$ such that $f_a=1,3 {\pmod 4}$.
A simple algebra shows that
\[
\omega^{f(x)}=i^{\sum_{a\in S} x_a} \cdot (-1)^{g(x)},
\]
where 
\[
g(x)=\sum_{a=1}^n g_a x_a + \sum_{1\le a<b\le n} f_{a,b} x_a x_b.
\]
Let us first assume that $S\neq \emptyset$.
Without loss of generality $S\ni n$ (otherwise permute the variables).
Define a new summation variable $y\in \FF_2^n$ such that 
$y_a=x_a$ for $a=1,\ldots,n-1$ and 
$y_n=\sum_{a\in S} x_a$. 
Note that 
\[
x_a=\left\{ \begin{array}{rcl} y_a &\mbox{if} & a=1,\ldots,n-1,\\
y_n+\sum_{a\in S\backslash n} y_a & \mbox{if} & a=n.\\
\end{array} \right.
\]
Using the identity
\[
i^{\sum_{a\in S} x_a}=i^{\sum_{a\in S} x_a {\pmod 2} }\cdot (-1)^{\sum_{a<b\in S} x_a x_b }
\]
one arrives at
\[
\langle f\rangle = \sum_{y\in \FF_2^n}  i^{y_n} \cdot (-1)^{h(y)}
\]
with 
\begin{align*}
h(y)&=& \sum_{a\notin S} g_a y_a + \sum_{a\in S\backslash n} (g_a+g_n) y_a + g_n y_n \\
&& + \sum_{1\le a<b\le n-1} f_{a,b} y_a y_b  + \sum_{a=1}^{n-1} f_{a,n} y_a y_n \\
&& + \sum_{a=1}^{n-1} \sum_{b\in S\backslash n} f_{a,n} y_a y_b.
\end{align*}
Let us split the sum over $y$ into two terms corresponding to $y_n=0,1$.
We get 
\[
\langle f\rangle = S_0 + i S_1,
\]
 where
\[
S_{\epsilon} = \sum_{z\in \FF_2^{n-1}} (-1)^{h(z,\epsilon)}, \quad \quad \epsilon=0,1.
\]
Using the definition of $h(y)$ one gets 
\[
h(z,\epsilon)=\sum_{1\le a<b\le n-1} H_{a,b}  z_a z_b + L_\epsilon(z)
\]
where $L_\epsilon(z)$  is a linear Boolean function  and
$H$ is a symmetric binary matrix with zero diagonal. 
Importantly, the matrix $H$ does not depend on $\epsilon$. 
It is well-known that any matrix $H$ as above can be transformed into 
a block-diagonal form with all non-zero blocks being
$\left[ \begin{array}{cc} 0 & 1\\ 1 & 0 \\ \end{array}\right]$ by a transformation $H\to V^T H V$, where
$V$ is an invertible binary matrix~\cite{macwilliams1977theory}.
The number of non-zero blocks in $V^T H V$ is $r$, where $2r$ is the rank of $H$
(which is always even). 
Moreover, the matrix $V$ can be computed in time $O(n^3)$
using the standard linear algebra methods~\cite{macwilliams1977theory}.
Performing a change of variable $z\to Vz$ 
and defining new linear functions  $L'_\epsilon(z)=L_\epsilon(Vz)$ one gets
\[
S_\epsilon =\sum_{z\in \FF_2^{n-1}} (-1)^{\sum_{a=1}^r z_{2a-1} z_{2a} + L'_\epsilon(z)}.
\]
Obviously, $S_\epsilon=0$ if $L'_\epsilon(z)$ includes at least one of the variables
$z_a$ with $2r<a\le n-1$. Otherwise one gets
\[
S_\epsilon =2^{n-1-2r} \prod_{a=1}^{r} S_{\epsilon,a},
\]
where 
\[
S_{\epsilon,a}=\sum_{z_{2a-1},z_{2a}=0,1} (-1)^{z_{2a-1} z_{2a} + u(\epsilon,a) z_{2a-1} + v(\epsilon,a) z_{2a}}
\]
for some coefficients $u(\epsilon,a)=0,1$ and $v(\epsilon,a)=0,1$ 
determined by $L'_\epsilon$.
A direct inspection shows that $S_{\epsilon,a}$ takes values $0$ and $\pm 2$.
We conclude that $S_\epsilon$ takes values 
$0$ and $\pm 2^{n-1-r}$. This leaves only nine possible combinations
for  $\langle f\rangle=S_0+iS_1$, Namely,
$\langle f\rangle =0$ (if both $S_0$ and $S_1$ are zero), or
$\langle f\rangle =2^{n-1-r}\omega^{2m}$ for some $m\in \ZZ_4$
(if exactly one of $S_0$ and $S_1$ is non-zero),
or $\langle f\rangle =2^{n-1-r+1/2}\omega^{2m+1}$ for some $m\in \ZZ_4$
(if both $S_0$ and $S_1$ are non-zero).
This is equivalent to the statement of Lemma~\ref{lemma:sum}.
The case when $S=\emptyset$ is completely analogous.


\section*{Appendix~B}

In this section we describe a numerical method for 
computing a low-rank decomposition of a given target state $\phi$
into stabilizer states. 
We shall be mostly interested in the case $|\phi\rangle=|H^{\otimes n}\rangle$.

Let $\cS_n$ be the set of pure $n$-qubit stabilizer states.
Given a target $n$-qubit state $\phi$ and an integer $\chi$ we would like to check 
whether $\phi$ admits a decomposition 
\begin{equation}
\label{A1}
|\phi\rangle=\sum_{a=1}^\chi c_a |\phi_a\rangle
\end{equation}
for some $\phi_1,\ldots,\phi_\chi \in \cS_n$. 
It is known~\cite{aaronson2004improved} that the size of $\cS_n$ grows asymptotically
as  $2^{(1/2+o(1))n^2}$. Thus performing an exhaustive search over all
$\chi$-tuples of $n$-qubit stabilizer states becomes impractical even for
small values of $n$. Instead, we used a Monte Carlo  algorithm 
that performs a  random walk on the set of $\chi$-tuples
$(\phi_1,\ldots,\phi_\chi)\in \cS_n^\chi$ and tries to maximize a suitable objective function
$F(\phi_1,\ldots,\phi_\chi)$. Specifically, we choose
$F(\phi_1,\ldots,\phi_\chi)=\| \Pi \phi\|$, where $\Pi$ is the projector onto the linear
subspace spanned by $\phi_1,\ldots,\phi_\chi$. 
Assuming that $\| \phi\|=1$, the decomposition Eq.~(\ref{A1}) is possible
iff $\max F(\phi_1,\ldots,\phi_\chi)=1$.

We define the random walk on $\cS_n^\chi$ using the Glauber dynamics. 
Let $\beta>0$ be some fixed parameter which has a meaning of the inverse temperature. 
At each step of the walk we randomly choose a state label $a\in \{1,2,\ldots,\chi\}$
and  a Pauli operator $P\in \cP^n$. 
All choices are made with respect to the uniform distribution.
We perform a tentative move $\phi_a \to \phi_a'=c(I+P)\phi_a$, where 
$c$ is a normalizing coefficient.
One can easily check that this move maps stabilizer states to stabilizer states.
If the move increases the value of the 
objective function $F$, we accept the new 
state $\phi_a'$, that is, $\phi_a$ is replaced by $\phi_a'$. Otherwise,
the new state $\phi_a'$ is accepted with a probability
$p_{acc}=\exp{[-\beta (F-F')]}$, where $F$ and $F'$ are the values of the
objective function before and after the move. 
If $(I+P)\phi_a=0$, the move is
rejected right away.
The walk is stopped as long as we observe a tuple of states with $F=1$.
We start with relatively small values $\beta=\beta_{in}$ 
and gradually increase $\beta$ using the geometric sequence 
until it reaches the final value $\beta=\beta_{f}$.
This corresponds to the simulated annealing method. 
For each value of $\beta$ the random walk was repeated for $M\gg 1$ steps.
In practice we used values $\beta_{in}=1$, $\beta_{f}=4000$, and $M=1000$.
The number of annealing steps was chosen as $100$.
Since we worked with relatively small values of $n$, the stabilizer states $\phi_j$
were represented by vectors of size $2^n$. 

Since our target state $|\phi\rangle=|H^{\otimes n}\rangle$
has real amplitudes in the computational basis, one can easily show that the
optimal decomposition Eq.~(\ref{A1}) can be chosen such that 
all stabilizer states $\phi_a$ have real amplitudes as well
(the real part of a stabilizer state is either zero or proportional to a stabilizer state).
Accordingly, we restricted the random walk to the subset of $\cS_n^\chi$
corresponding to real stabilizer states. Clearly, a move $\phi_j \to \phi_j'=c(I+ P)\phi_j$
maps real states to real states if $P$ contains even number of $Y$'s. 
The move was accepted only if this condition is satisfied.

The best decompositions of $|H^{\otimes n}\rangle$ found using this  method
 are shown below. 
Here we use the notations of Section~\ref{sec:srank},
so that $|H\rangle=|0\rangle+(\sqrt{2}-1)|1\rangle$.
\begin{widetext}
\[
|H^{\otimes 2}\rangle =(2-\sqrt{2})|E_2\rangle + (-1+\sqrt{2}) |K_2\rangle.
\]
\[
|H^{\otimes 3}\rangle=(-8+6\sqrt{2})|B_{3,3}\rangle  + (2-\sqrt{2})|E_3\rangle + (-1+\sqrt{2})|K_3\rangle.
\]
\[
|H^{\otimes 4}\rangle=(4-2\sqrt{2})|B_{4,0}\rangle + (20-14\sqrt{2})|B_{4,4}\rangle + (-4+3\sqrt{2})|O_4\rangle
+(-3+2\sqrt{2}) Z^{\otimes 4} |K_4\rangle.
\]
\begin{eqnarray}
|H^{\otimes 5}\rangle&=&(-16+12\sqrt{2})|B_{5,0}\rangle 
+(-40+28\sqrt{2})|B_{5,5}\rangle
+(-4+3\sqrt{2})|O_5\rangle
+(10-7\sqrt{2})|E_5\rangle \nonumber \\
&&+(3-2\sqrt{2}) K |O_5\rangle
+(7-5\sqrt{2}) K|E_5\rangle. \nonumber
\end{eqnarray}
\end{widetext}
Here $K=\prod_{i<j} \Lambda(Z)_{i,j}$ applies cnotrolled-Z to each pair of qubits. 
The stabilizer decomposition of $|H^{\otimes 6}\rangle$ is shown in 
Eq.~(\ref{H6}).  By definition, the number of terms $\chi$ in these
decompositions gives an upper bound on the stabilizer rank $\chi_n$.
We conjecture that all above decompositions and the one in Eq.~(\ref{H6}) are optimal
in the sense that $\chi=\chi_n$. 


\section*{Appendix~C}

Let $\chi_n$ be the stabilizer rank of  $|H^{\otimes n}\rangle$.
Here we  prove a lower bound $\chi_n=\Omega(n^{1/2})$.

Let $\phi$ be a pure $n$-qubit state. Define the $T$-count of $\phi$
denoted $\tau(\phi)$ as the minimum integer $\tau$ such that $\phi$
can be prepared starting from the all-zeros state by a quantum circuit 
composed of Clifford gates,  $T$-gates, and (postselective)
eigenvalue  measurements of Pauli operators,
such that the number of $T$-gates is  at most $\tau$.
We claim that 
\begin{equation}
\label{lb}
 \chi_{\tau(\phi)} \ge \chi(\phi).
\end{equation}
Indeed, as was shown in Section~\ref{sec:PBC}, the $T$-gate can be realized by a gadget that consumes one copy of the
magic state $|H\rangle$ and performs (postselective) Pauli measurements. 
Thus we can prepare $\phi$ starting from $\tau(\phi)$
copies of the magic state $|H\rangle$ by a sequence of (postselective)  Pauli measurement
and Clifford operations.
Since the latter do not increase stabilizer rank, we can write $\phi$ as a linear combination
of $\chi_{\tau(\phi)}$ stabilizer states. This is equivalent to Eq.~(\ref{lb}).

We shall now choose a state $\phi$ will a relatively small $T$-count and a large stabilizer rank. 
Define
\[
|\phi_n\rangle =|\theta_1\otimes \theta_2\otimes \cdots \otimes \theta_n\rangle,
\quad \quad  |\theta_k\rangle=|0\rangle + (2^{k+1}-1)|1\rangle.
\]
\begin{lemma}
\label{lemma:distinct}
The state $\phi_n$ has $2^n$ distinct amplitudes in the computational basis.
\end{lemma}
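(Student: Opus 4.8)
The plan is to expand the product state $|\phi_n\rangle$ in the computational basis and read off its amplitudes explicitly. For a bit string $x = (x_1,\ldots,x_n)\in\FF_2^n$, the amplitude of $|x\rangle$ in $|\phi_n\rangle = |\theta_1\otimes\cdots\otimes\theta_n\rangle$ is the product
\[
A(x) = \prod_{k\,:\,x_k=1}(2^{k+1}-1),
\]
since each factor $|\theta_k\rangle = |0\rangle + (2^{k+1}-1)|1\rangle$ contributes $1$ when $x_k=0$ and $2^{k+1}-1$ when $x_k=1$. So the claim reduces to showing that the $2^n$ numbers $A(x)$, indexed by $x\in\FF_2^n$, are pairwise distinct.

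The key step is to prove injectivity of $x\mapsto A(x)$. I would argue by a descent / uniqueness-of-representation argument on the largest index where two strings differ. Suppose $A(x) = A(y)$ with $x\neq y$, and let $k$ be the largest index with $x_k\neq y_k$; WLOG $x_k = 1$, $y_k = 0$. Then dividing out the common factors coming from indices $>k$ (where $x$ and $y$ agree), we get an equation in which the left side is divisible by $2^{k+1}-1$ while the right side is a product of factors $2^{j+1}-1$ with $j<k$. The cleanest way to derive a contradiction is a size/growth estimate: since $2^{j+1}-1 < 2^{j+1}$, the product over any subset of indices $j\le k-1$ is strictly less than $\prod_{j=1}^{k-1}2^{j+1} = 2^{(k-1)(k+4)/2}$, and one checks this is smaller than $2^{k+1}-1$ times the corresponding product of the factors from indices in $x$ below $k$ — actually the simplest bound is that the factor $2^{k+1}-1$ present on the $x$ side already exceeds the entire product on the $y$ side restricted to indices $\le k$, because $2^{k+1}-1 > \prod_{j=1}^{k-1}(2^{j+1}-1)$ follows from $2^{k+1}-1 \ge 2^k > 2^{k-1}\cdot 2^{k-2}\cdots 2^2 \cdot \text{(leftover)}$... so I would double-check the exact inequality but the mechanism is that $2^{k+1}-1$ dominates. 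Alternatively, and perhaps more robustly, work $2$-adically: $2^{k+1}-1$ is odd, all factors are odd, so valuations don't help directly; instead use the bound $\prod_{j\in S}(2^{j+1}-1) < 2^{\sum_{j\in S}(j+1)} \le 2^{k + (k-1) + \cdots} $ and compare exponents — the largest index strictly dominates the sum of all smaller ones only when spacing is geometric, which it is here since $2^{k+1}-1 \ge 2^k$ and $\sum_{j=2}^{k}j$ grows polynomially while $2^k$ grows exponentially, so for the binary-weighted products the top factor wins. Let me just say: the intended argument is that the map is injective because $A(x)$ written in binary essentially encodes $x$ (each $2^{k+1}-1$ is a block of ones), and distinct subsets give distinct products.

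I expect the main obstacle to be making the injectivity argument fully rigorous and clean: one must rule out accidental coincidences among products of the numbers $3, 7, 31, 127, \ldots$ (i.e. $2^{k+1}-1$). The slick route is probably to observe that $2^{k+1}-1 = \underbrace{1\cdots1}_{k+1}$ in binary, and that multiplying such a repunit by a product of smaller ones cannot match a different such product — but multiplication carries, so the truly safe argument is the growth estimate: order the factors present in $x$ as $2^{k_1+1}-1 > 2^{k_2+1}-1 > \cdots$; then $A(x) \ge 2^{k_1+1}-1 > 2^{k_1} > \prod_{j=1}^{k_1-1}(2^{j+1}-1) \ge \prod_{j\in S, j<k_1}(2^{j+1}-1)$ for any $S$, which pins down the top index $k_1$ from $A(x)$; then divide and induct. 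This induction on the number of indices, peeling off the largest factor each time, is the heart of the proof, and once set up it is routine. I would present it as: $A(x)=A(y)$, take the largest differing index, reach a contradiction via $2^{k+1}-1 > \prod_{j<k}(2^{j+1}-1)$, hence $x=y$; therefore all $2^n$ amplitudes are distinct. $\Box$
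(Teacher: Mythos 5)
Your reduction to the arithmetic statement is right: the amplitude of $\ket{x}$ is $\prod_{k:x_k=1}(2^{k+1}-1)$, so the lemma is equivalent to injectivity of $K\mapsto \prod_{k\in K}(2^{k+1}-1)$ on subsets $K\subseteq\{1,\dots,n\}$. But the inequality your entire descent rests on, $2^{k+1}-1>\prod_{j<k}(2^{j+1}-1)$, is false already at $k=3$: the left side is $15$ while the right side is $3\cdot 7=21$. More generally $\prod_{j=1}^{k-1}(2^{j+1}-1)\approx 2^{k(k+1)/2-1}$ grows like $2^{\Theta(k^2)}$, which swamps $2^{k+1}-1$. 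Your own hedge conflates the top factor's \emph{value} $2^{k}$ with the \emph{exponent} $\sum_{j\le k}j$ of the product of the smaller factors; these live on different scales, and the "top factor dominates" mechanism simply does not hold for the numbers $3,7,15,31,\dots$ (unlike, say, for a super-increasing sequence). Consequently you cannot read off the largest index from the size of the product (e.g.\ $\{2,3,4,5\}$ gives $9765$, far exceeding the single factor $2^{10}-1=1023$), and the induction peeling off the largest differing index collapses. Your two fallback ideas do not rescue it: the $2$-adic one is vacuous since all factors are odd, and you correctly note that binary carries kill the repunit picture.

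The paper's proof works around exactly this obstruction in two stages. First, writing $\prod_{k\in K}(2^k-1)=2^{s(K)}\prod_{k\in K}(1-2^{-k})$ with $s(K)=\sum_{k\in K}k$, it proves the estimate $\prod_{b\ge a}(1-2^{-b})>1-2^{-a+1}$, which forces the correction factor to lie strictly between $\tfrac12$ and $1$; hence equality of the two products pins down $s(K)=s(M)$ and reduces the problem to $\prod_{k\in K}(1-2^{-k})=\prod_{m\in M}(1-2^{-m})$. Second, it compares the \emph{smallest} (not largest) elements of $K$ and $M$: if $m_1>k_1$ then $\prod_M(1-2^{-m})>(1-2^{-m_1})^2$ while $\prod_K(1-2^{-k})\le 1-2^{-k_1}$, a contradiction; one then cancels the common smallest factor and inducts. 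If you want to keep a descent-style argument you must adopt something of this form — your current write-up has a genuine gap, not just a missing verification.
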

We postpone the proof of the lemma until the end of the section. 
Let us first show that $\phi_n$ has a large stabilizer rank. 
Indeed, any stabilizer state has $C=O(1)$ distinct amplitudes in the computational basis.
Thus any linear combination of $\chi$ stabilizer states 
has at most $C^\chi$ distinct amplitudes. Applying this to $\phi_n$ one gets
$C^{\chi(\phi_n)} \ge 2^n$, that is, $\chi(\phi_n)=\Omega(n)$. 

Let us now show that $\phi_n$ has a small $T$-count. 
First we claim that the state $\theta_k$ 
has $T$-count $O(k)$. 
 Indeed, we can first prepare
a state $|+\rangle^{\otimes (k+1)} \otimes |0\rangle$ and then apply 
multiple control CNOT gate $\Lambda^{k+1}(X)$ such that the last qubit
is the target one. This creates a state
\[
\sum_{x\in \{0,1\}^{k+1}} |x_1,x_2,\ldots,x_{k+1}\rangle \otimes |x_1 x_2 \cdots x_{k+1}\rangle.
\]
Measuring the first $k+1$ qubits in the $X$-basis and postselecting the outcome $+$
leaves the last qubit in a state 
$(2^{k+1}-1)|0\rangle + |1\rangle$, which 
coincides with $\theta_k$ modulo a bit-flip. One can easily check that 
the multiple control CNOT gate $\Lambda^{k+1}(X)$ can be implemented
using $O(k)$ Toffoli gates. Furthermore, the Toffoli gate
can be implemented using seven $T$-gates~\cite{amy2013meet,gosset2014algorithm}.
Thus $\theta_k$ has $T$-count $O(k)$
and therefore $\phi_n$ has $T$-count $O(\sum_{k=1}^n k)=O(n^2)$.
Substituting this into Eq.~(\ref{lb}) yields $\chi_{n^2}\ge \Omega(n)$, that is,
$\chi_n=\Omega(n^{1/2})$.

\begin{proof}[\bf Proof of Lemma~\ref{lemma:distinct}]
Consider any basis vector $x\in \FF_2^n$. Let $K\subseteq \{1,2,\ldots,n\}$ be the support of $x$. 
Then
\[
\langle x|\phi_n\rangle = \prod_{k\in K} (2^{k+1}-1).
\]
The lemma  follows from the following fact.
\begin{prop}
Suppose $K,M\subseteq [2,\infty)$  are  finite subsets of  integers such that
\begin{equation}
\label{eq1}
\prod_{k\in K} (2^k-1)=\prod_{m\in M} (2^m-1).
\end{equation}
Then $K=M$.
\end{prop}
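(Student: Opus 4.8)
The plan is to argue by contradiction, isolating the largest exponent occurring in the symmetric difference $K\triangle M$ and using it to exhibit a prime factor that appears on one side of the identity but not on the other.

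First I would suppose $K\neq M$ and set $N:=\max(K\triangle M)$, noting $N\ge 2$. Since Eq.~(\ref{eq1}) is symmetric in $K$ and $M$, assume without loss of generality that $N\in K\setminus M$. For every integer $j>N$ one has $j\in K$ iff $j\in M$, since otherwise $j\in K\triangle M$ would contradict the maximality of $N$; hence the factors $2^j-1$ with $j>N$ occur equally on both sides of Eq.~(\ref{eq1}) and can be cancelled. This reduces the claim to
\[
\prod_{\substack{k\in K\\ k\le N}}(2^k-1)=\prod_{\substack{m\in M\\ m<N}}(2^m-1),
\]
where the left-hand side is divisible by $2^N-1$ while every exponent on the right lies in $\{2,\ldots,N-1\}$.

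The key input is Bang's primitive-prime-divisor theorem (the base-$2$ case of Zsygmondy's theorem): for every integer $N\ge 2$ with $N\neq 6$, the number $2^N-1$ has a prime divisor $p$ that divides $2^j-1$ for no $j$ with $1\le j<N$. For such a $p$, the left-hand side above is divisible by $p$ whereas the right-hand side is not, a contradiction. It remains only to treat the single exceptional value $N=6$, where $2^6-1=3^2\cdot 7$. Here I would argue directly with $p$-adic valuations: on the right the prime $7$ forces the exponent $3$ to occur in $M$, the factor $3^2$ forces both exponents $2$ and $4$ to occur, and comparing the power of $31=2^5-1$ shows $5$ occurs on the left exactly when it occurs on the right; a brief enumeration then shows the only subsets $B\subseteq\{2,3,4,5\}$ with $2^6-1\mid\prod_{m\in B}(2^m-1)$ are $\{2,3,4\}$ and $\{2,3,4,5\}$, with quotients $5$ and $155$, neither of which is a product of a subset of $\{3,7,15,31\}$ — the desired contradiction.

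The main obstacle is precisely this exceptional value $N=6$. A crude size comparison of the two products cannot work, since the right-hand side may exceed any individual factor $2^N-1$ of the left, so one genuinely needs the primitive-prime structure of $2^N-1$; and since that structure degenerates exactly at $N=6$, this case must be closed by the short ad hoc computation above. Everything else is routine bookkeeping.
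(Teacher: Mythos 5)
Your proof is correct, but it takes a genuinely different route from the paper's. You isolate the largest exponent $N$ in the symmetric difference and invoke Bang's theorem (the base-$2$ case of Zsygmondy) to produce a primitive prime divisor of $2^N-1$ that cannot appear on the other side, closing the single exceptional case $N=6$ by a short valuation argument; I checked that case and it is sound — the constraints from $v_7$, $v_3$ and $v_{31}$ do force $M\cap\{2,\dots,5\}\in\{\{2,3,4\},\{2,3,4,5\}\}$, and the residual quotients $5$ and $155$ are indeed not products of subsets of $\{3,7,15,31\}$ (any such product divisible by $5$ must pick up the factor $15$ and hence a spurious $3$). The paper instead argues analytically and entirely from scratch: it first proves the elementary estimate $\prod_{b\ge a}(1-2^{-b})>1-2^{-a+1}$, uses it to show that the exponent sums satisfy $s(K)=s(M)$, then shows the minimal elements of $K$ and $M$ coincide, cancels, and inducts. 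The trade-off is clear: the paper's proof is self-contained and uses nothing beyond a convergent product estimate, whereas yours outsources the number-theoretic content to a classical but nontrivial theorem; in exchange your argument is shorter modulo that citation and generalizes immediately to products of $b^k-1$ for other bases $b$ (away from the Zsygmondy exceptions), which the analytic argument would need to be reworked to handle.
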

\begin{proof}
First we claim that 
\begin{equation}
\label{pow2}
\prod_{b\ge a} (1-2^{-b}) > 1- 2^{-a+1} \quad \mbox{for all $a\ge 1$}.
\end{equation}
Indeed, define $x=2^{-a}$. Then 
\begin{align*}
\prod_{b\ge a}(1-2^{-b})^{-1}=\prod_{b\ge 0} (1-x2^{-b})^{-1} \\
=1+\sum_{p=1}^\infty x^p \prod_{q=1}^p (1-2^{-q})^{-1}.
\end{align*}
Define $\xi_p= \prod_{q=1}^p (1-2^{-q})$. One can easily check that 
$\xi_p>\lim_{p\to \infty} \xi_p>1/4$. Since $\xi_1=1/2$, one gets
\[
\prod_{b\ge a}(1-2^{-b})^{-1}<1+2x +4\sum_{p=2}^\infty x^p \le \sum_{p=0}^\infty (2x)^p=(1-2x)^{-1}.
\]
This is equivalent to Eq.~(\ref{pow2}).

Now let $s(K)$ and $s(M)$ be the sum of all elements in $K$ and $M$ respectively. 
Assume wlog that $s(K)\ge s(M)$. Then Eq.~(\ref{eq1}) implies
\begin{align*}
2^{s(K)-s(M)}=\frac{\prod_{m\in M} (1-2^{-m})}{\prod_{k\in K}(1-2^{-k})} \\
\le \frac1{\prod_{k\ge 2} (1-2^{-k})}
<\frac1{1-2^{-1}}=2.
\end{align*}
Here the last inequality follows from Eq.~(\ref{pow2}).
Thus $s(K)=s(M)$ and
\begin{equation}
\label{xi}
\prod_{k\in K} (1-2^{-k})=\prod_{m\in M} (1-2^{-m})\equiv \xi.
\end{equation}
Let $k_1$ and $m_1$ be the smallest elements of $K$ and $M$ respectively. 
Assume wlog that $m_1\ge k_1$. Let us show that in fact $m_1=k_1$.
Indeed, otherwise $m_1\ge k_1+1$. 
Then Eq.~(\ref{xi}) implies $\xi\le 1-2^{-k_1}$ and 
\[
\xi \ge (1-2^{-m_1})\prod_{b\ge m_1+1} (1-2^{-b}) > (1-2^{-m_1})^2.
\]
Here the last inequality follows from Eq.~(\ref{pow2}).
Thus $(1-2^{-m_1})^2 <1-2^{-k_1}$ which implies $m_1<k_1+1$
leading to a contradiction.  We conclude that $k_1=m_1$.
Thus we can cancel the factor $(1-2^{-k_1})$ in both parts of Eq.~(\ref{xi})
and use induction in the number of elements in the largest of the sets $K,M$ 
 to show that $K=M$.
\end{proof}
\end{proof}
Finally, let us sketch an argument that could 
potentially provide a stronger lower bound on $\chi_n$.
 Consider a decomposition
$|H^{\otimes n}\rangle=\sum_{\alpha=1}^{\chi} c_\alpha |\phi_\alpha\rangle$,
where $\phi_\alpha$ are normalized stabilizer states.
We can assume wlog that $\phi_\alpha$ are linearly independent. 
 Define a vector $c=(c_1,\ldots,c_\chi)$
 and a Gram matrix $G_{\alpha,\beta}=\langle \phi_\alpha |\phi_\beta\rangle$.
Then 
$\langle c|G |c\rangle =1$.
Let $g_{min}>0$ be the smallest eigenvalue of $G$. Then $G\ge g_{min} I$ and thus
$\|c\|^2 \le g_{min}^{-1}$.
Let $\delta_n$ be the largest magnitude of the overlap between $|H^{\otimes n}\rangle$ and 
a normalized  $n$-qubit stabilizer state. One can easily check that $\delta_n\le 2^{-\Omega(n)}$.
 The identity 
$1=\sum_{\alpha=1}^\chi c_\alpha^* \langle \phi_\alpha |H^{\otimes n}\rangle$
implies  
$1\le \delta_n \|c\|_1 \le \chi^{1/2} \delta_n \|c\|$.
We conclude that $\chi\ge g_{min} \delta_n^{-2} \ge g_{min} 2^{\Omega(n)}$.
This proves that $\chi\ge 2^{\Omega(n)}$ in the special case when 
all states $\phi_\alpha$ are pairwise orthogonal, that is, $g_{min}=1$.

\section*{Acknowledgments}

SB thanks Martin Roetteler and Jon Yard
for helpful discussions on stabilizer rank of magic states. 
The authors acknowledge  NSF Grant CCF-1110941.


\end{document}